\newtheorem{theorem}{Theorem}
\newtheorem{defi}{Definition}[section]
\newtheorem{proposition}{Proposition}[section]
\newtheorem{remark}{Remark}[section]
\newtheorem{corollary}{Corollary}[section]
\newtheorem{lem}{Lemma}[section]
\numberwithin{equation}{section}
\newcommand{\eps}{\epsilon}
\newcommand{\lb}{\label}
\newcommand{\go}{\rightarrow}
\newcommand{\ee}{\end{equation}}
\newcommand{\be}{\begin{equation}}
\newcommand{\bea}{\begin{eqnarray}}
\newcommand{\eea}{\end{eqnarray}}
\newcommand{\sbea}{\begin{subequations}\begin{eqnarray}}
\newcommand{\seea}{\end{eqnarray}\end{subequations}} 
\newcommand{\ees}{\end{equation*}}
\newcommand{\bes}{\begin{equation*}}
\newcommand{\beas}{\begin{eqnarray*}}
\newcommand{\eeas}{\end{eqnarray*}}
\newcommand{\rf}[1]{(\ref{#1})}
\newcommand{\mR}{\mathbb{R}\,}
\newcommand{\N}{\mathbb{N}\,}
\newcommand{\la}{\lambda_n\,}
\newcommand {\Rl} {\mathbb{R}} \newcommand {\Z} {\mathbb{Z}}
 \newcommand {\mA} {\mathcal{A}}
\newcommand {\D} { \Delta}
\newcommand {\dist} { \mbox{dist}}
\DeclarePairedDelimiter{\floor}{\lfloor}{\rfloor}
\begin{document}

\title{Surface Energies Arising in Microscopic Modeling of Martensitic Transformations}
\author{Georgy Kitavtsev\footnote{Max Planck Institute for Mathematics in the Sciences, Inselstr. 22, D--04103 Leipzig, Germany. {\tt E-mail}: Georgy.Kitavtsev@mis.mpg.de}, Stephan Luckhaus\footnote{Mathematical Institute, University of Leipzig, 04009 Leipzig, Germany. {\tt E-mail}: stephan.luckhaus@math.uni-leipzig.de},
and Angkana R\"uland\footnote{Mathematical Institute, University of Bonn, Endenicher Allee 60, D - 53115 Bonn, Germany. {\tt E-mail}: rueland@math.uni-bonn.de}}
\date{\today}
\maketitle
\begin{abstract}
In this paper we construct and analyze a two-well Hamiltonian on a 2D atomic lattice. The two wells of the Hamiltonian
are prescribed by two rank-one connected martensitic twins, respectively. By constraining the deformed configurations
to special 1D atomic chains with position-dependent elongation vectors for the vertical direction, we show that the structure of ground states
under appropriate boundary conditions is close to the macroscopically expected
twinned configurations with additional boundary layers localized
near the twinning interfaces. In addition, we proceed to a continuum limit, show asymptotic piecewise rigidity of minimizing sequences
and rigorously derive the corresponding limiting form of the surface energy.
\end{abstract}

\section{Introduction}
In the last decades there has been an intensive mathematical research on martensitic transformations
in shape memory alloys using nonlinear elasticity models of continuum mechanics, see e.g.~\cite{BJ87,KM92,DM95,B03}.
In several models a finite length scale of the emerging martensitic microstructure was obtained and analyzed 
by adding penalizing higher order gradient terms to the elastic energy~\cite{KM92,DM95,CS06}. 
In parallel to this, the analysis of microscopic models in nonlinear elasticity and the systematic derivation of the corresponding discrete-to-continuum limits has recently
attracted a lot of attention ~\cite{BBL02,TF02,AC04,BG06,EM07,BS12}.  
In this context, even more general reference-free models have been constructed and analyzed in~\cite{Th06,LM10}.
In particular, the derivation of the arising limiting surface energies was
investigated rigorously in~\cite{CT02,BC07,Th11,SSZ11,Hu13,Ro12}.
While the cases when the minimizers of the energy belong to a single well
are well understood also in several dimensions, see e.g.~\cite{Th11}, 
surface energies for two-well, discrete problems have to the best of our knowledge only been derived rigorously in 1D cases~\cite{BC07,SSZ11}.
The multi-well structure, however, is an intrinsic feature of martensitic microstructures 
and of the understanding of the appearing characteristic, finite length scales. \\

In this paper we investigate the problem of the formation of twinned martensitic microstructures from a microscopic point of view. 
We begin by defining a class of atomistic two-well Hamiltonians
on a 2D atomic lattice. These Hamiltonians feature nonconvex interactions and are constructed to model simple martensitic microstructures.
We aim at describing the structure of their ground states and at deriving a limiting form of the corresponding surface energies at zero temperature. 
The latter should emerge naturally from the full microscopic energy given by the Hamiltonian.\\

The novelty of our approach consists of the \emph{non-discreteness} of the set constituting our minimizers. 
The energy wells are given by $SO(2)U_{0}\cup SO(2)U_{1}$ with $U_{0}$ and $U_1$ being rank-one-connected matrices in $SL(2,\mR)$.
This setting allows for a very rich microscopic behavior reflecting the interesting behavior of the corresponding continuum models~\cite{BJ87,KM92,DM95,B03}.
Due to the expected complexity of the material behavior, we only consider a simplified ``$(1+\eps)$-dimensional'' model.
In a sense, the model we investigate in this study is an intermediate one.
On one hand, it is more involved than a purely one-dimensional model as we consider two-dimensional deformations.
On the other hand, it is not fully two-dimensional since we restrict our attention to laminates -- i.e. 1D atomic chains. 
Thus, the considered model does not include genuinely two-dimensional phenomena such as the formation of e.g. branched microstructures 
which are expected to form for a large class of boundary conditions, see e.g.~\cite{CO12,Ch13}.
However, already in our simplified setting we are confronted with phenomena 
which, from a mathematical point of view, differ from the analogous one-dimensional situations:
\begin{itemize}
\item In the crucial compactness statements which are necessary in order to pass to the first order $\Gamma$-limit, one cannot argue via pure $L^{\infty}$ arguments. As the energy wells of the functional are not discrete an additional argument has to ensure compactness. For this we use the Friesecke-James-M\"uller $L^p$ rigidity theorem, c.f. \cite{FJM05}.
\item In order to estimate the density of defect points possessing high local energy
(needed for obtaining compactness and piecewise rigidity of the minimizing sequences),
 we apply a dimension separation approach, i.e. we first estimate
the number of high energy points for a few fixed horizontal atomic layers and then in the bulk between them.
This argument makes crucial use of the structure of the two wells (c.f. the proof of Proposition 3.1.).
\item Due to the lack of $L^{\infty}$ compactness and the prescribed deformation in the vertical direction, we develop slight modifications 
of Braides' and Cicalese's ~\cite{BC07} original, one-dimensional strategy of deriving the respective first order $\Gamma$-limits.
At this point horizontal and vertical ``cutting procedures'' are introduced (see e.g. Remark 4.1) that preserve
the non-interpenetration condition of the modified deformations.
\end{itemize}

We finally conclude the introduction by commenting on the organization of the remainder of the article:
\begin{itemize}
\item In Section 2 we introduce a class of discrete two-well Hamiltonians with prescribed properties.
Under a special periodicity assumption on the atomic configuration in the vertical direction, we then reduce these Hamiltonians to functions on certain generating 1D atomic chains.
\item In Section 3 we show compactness and asymptotic rigidity of the minimizing sequences as well as
of sequences whose rescaled energy remains controlled in the continuum limit.
\item In Section 4 we, rigorously, derive the first order $\Gamma$-limit for the chain Hamiltonian and, by that, obtain the limiting form of the surface energy.
\item In Section 5 we provide results of a numerical simulation underscoring the analytical results. These indicate exponential asymptotic decay of the boundary layers between twin configurations. 
\item In Section 6 we discuss the results and give an outlook.
\end{itemize}

\section{Setting and Notation}

In the sequel we work on the following parallelogram $\Omega \subset\mathbb{R}^2$ and for $n\in\N$ also consider the associated lattice $\Omega_n$ on it. 
For $\la = \frac{1}{n}$, we set:
\begin{align*}
\Omega &:= \left\{ z \Big|\; z=\ s\begin{pmatrix} 1\\ 0\end{pmatrix}+ t\frac{1}{\sqrt{2}}\begin{pmatrix}-1\\ 1 \end{pmatrix}, \ s,t\in [-1,1]  \right\}, \\
\Omega_{n}&:=\Omega \cap [\la\Z]^2.
\end{align*}
With a slight abuse of notation, we denote the lateral boundaries of the parallelograms by $\partial_x\Omega$ and $\partial_x \Omega_n$.
For brevity of notation, we further define the rescaled parallelograms 
\bes
\Omega_{n}^r:=\left\{ z \Big|\; z=\ s\begin{pmatrix} 1\\ 0\end{pmatrix}+ t\frac{1}{\sqrt{2}}\begin{pmatrix}-1\\ 1 \end{pmatrix}, \ s,t\in [-n,n]  \right\} \cap \Z^2.
\ees
Moreover, it proves to be convenient to introduce the notation
\begin{equation*}
\Omega(x_1,x_2):=  \Big \{z \Big| \;z =\begin{pmatrix} x \\ 0 \end{pmatrix}
+\frac{y}{\sqrt{2}}\begin{pmatrix} -1 \\ 1 \end{pmatrix}, \ \ x\in (x_1,x_2), \ \ y\in[-1,\,1]\Big\}\cap \Omega,
\end{equation*}
for a parallelogram determined by any pair of points $x_1,x_2\in [-1,1]$.
By $\mA_n$ we denote the set of all
deformations $u:\,\Omega_{n}
\rightarrow \mathbb{R}^2$ of a finite, $n$-dependent number of atoms from their initial reference configuration 
$\Omega_n$ such that $u$ is an orientation-preserving, non-selfinterpenetrating deformation, i.e.
\begin{equation}
\begin{split}
\mA_n:= \ &\Big\{u: \Omega_n\go\mR^2\Big|\ \mathrm{det}(u(x_2)-u(x_1),u(x_3)-u(x_1))\ge 0\ \text{for all}\\
&\ \{x_1,x_2,x_3\}\subset \Omega_n \quad
 \text{such that}\ \mathrm{diam}(x_1,x_2,x_3)=\sqrt{2}\la \ \\
&\ \text{and} \ \mathrm{det}(x_2-x_1,x_3-x_1)\ge 0\Big\}.\lb{An}
\end{split}
\end{equation}
Below we will identify such deformations with their piecewise affine interpolations 
\begin{equation*}
\begin{split}
\tilde{\mA}_n:=&\Big\{u: \Omega\go\mR^2:\,u\in C(\mR^2),\,u(x)\quad\text{is affine in}\quad\Omega_{ij}^\pm\Big|\\
&\ \ \mathrm{det}(u(x_2)-u(x_1),u(x_3)-u(x_1))\ge 0\\
&\ \ \text{for all}\ \{x_1,x_2,x_3\} \subset \Omega_n\ \text{such that}\ \mathrm{diam}(x_1,x_2,x_3)=\sqrt{2}\la\Big\},
\end{split}
\end{equation*}
where we define $\Omega_{ij}^\pm$ to be the triangles with the vertexes
\bes
\begin{pmatrix} i\la\\ j\la \end{pmatrix},\,\begin{pmatrix} (i+1)\la\\ j\la \end{pmatrix},\begin{pmatrix} i\la\\(j+1)\la\end{pmatrix}\quad
\text{and}\quad\begin{pmatrix} i\la\\ j\la \end{pmatrix},\,\begin{pmatrix} (i-1)\la\\ j\la \end{pmatrix},\begin{pmatrix} i\la\\(j-1)\la\end{pmatrix},
\ees
respectively. With a slight abuse of notation we will often identify $\mA_{n}$ and $\tilde{\mA}_{n}$ and omit the tildes in the notation in the sequel.

Moreover, in the remainder of the article we frequently make use of the notations
$f\lesssim\phi$ and $f\sim\phi$ in order to indicate the existence of positive, universal constants 
$c$ and $c_1,c_2$ such that the inequalities $f(\cdot)\le c\phi(\cdot)$ and
$c_1\phi(\cdot)\le f(\cdot)\le c_2\phi(\cdot)$ hold uniformly
in the set in which the arguments and parameters of the (positive) functions $f$ and $\phi$ are assumed to vary. \\

In the sequel, we will deal with two-dimensional Hamiltonians satisfying the following conditions:
\begin{itemize}
\item[(H1)] $H_n(u) =\sum\limits_{i,j=-n}^n \lambda_{n}^2 h\left(\frac{u^{ij}-u^{i\pm 1j}}{\lambda_{n}},\frac{u^{ij}-u^{ij \pm 1}}{\lambda_{n}}\right)$, 
where $u^{ij}:=u(i\la ,\,j\la )$. In this context, the $\pm$-signs denote
that $h$ depends on both the quantities with the $-$ and $+$ signs.
\item[(H2)] $h$ is rotation invariant, 
\item[(H3)] $h \in C^1$ has a super-linear, polynomial growth and satisfies 
\begin{align*}
h \left(\frac{u^{ij}-u^{i\pm 1j}}{\lambda_{n}},\frac{u^{ij}-u^{ij \pm 1}}{\lambda_{n}}\right)
\gtrsim \dist\left(\nabla u^{ij},
SO(2)U_{0}\cup SO(2)U_{1}\right)^p
\end{align*}
for some $p\in(1,\infty)$. Here, $\nabla u^{ij}$ is used as an abbreviation for the restriction of $\nabla u$
to the domain $\Omega_{ij}$, which is the union of the four triangles having one common vertex $[i\la,j\la]^T$. The inequality is assumed to hold uniformly in $\Omega_{ij}$.
\item[(H4)] The zero level set of the density $h$ is prescribed: On any domain $\Omega_{ij}$ the equation
$$h\left(\frac{u^{ij}-u^{i\pm 1j}}{\lambda_{n}},\frac{u^{ij}-u^{ij \pm 1}}{\lambda_{n}}\right)=0$$ 
is equivalent to $u = Q_{0}U_{0}x+ c_{0}$ or $u = Q_{1}U_{1}x+ c_{1}$.
Here $U_{0}, U_{1}$ are rank-one connected matrices such that for each matrix $U\in SO(2)U_i$, $i\in \{0,1\}$, there exist exactly 
two rank-one connected matrices in the respective other well,
$Q_{0}, Q_{1}\in SO(2)$ are arbitrary rotations and $c_0, c_1\in \mathbb{R}^2$ are constant off-set vectors. We further assume that $\det(U_0)=\det(U_1)=1$. 
\end{itemize}
The Hamiltonians satisfying the properties (H1)-(H4) are aimed at modeling a martensitic square-to-rectangular transformation in  $\mathbb{R}^2$ (which is a direct analog of cubic-to-tetragonal transformations in $\mathbb{R}^3$). 
One can easily show that the property (H3) implies that for all sufficiently small $\eta$
\begin{equation}
\begin{split}
h\left(\frac{u^{ij}-u^{i\pm 1j}}{\lambda_{n}},\frac{u^{ij}-u^{ij \pm 1}}{\lambda_{n}}\right)\le
\eta\Rightarrow& \left(\dist(\nabla u^{ij},SO(2)U_{0})\lesssim\eta^{1/p}\right.\\
& \left. \mbox{ or } \dist(\nabla u^{ij},SO(2)U_{1}) \lesssim\eta^{1/p}\right). 
\lb{H3r}
\end{split}
\end{equation}
In particular, estimate \rf{H3r} holds uniformly in $\Omega_{ij}$.\\

As an example of such an Hamiltonian we have the following atomistic two-well energy, $H_n(u)$, in mind:
\begin{eqnarray}
H_n(u)&:=&\sum\limits_{i,j=-n}^n \lambda_{n}^2 h\left(\frac{u^{ij}-u^{i\pm 1j}}{\lambda_{n}},\frac{u^{ij}-u^{ij \pm 1}}{\lambda_{n}}\right)\nonumber\\
&=&\sum_{i,j=-n}^{n}\lambda_{n}^{2}\left[\left(\left(\frac{u^{ij\pm 1}-u^{ij}}{\la}\right)^2- a^2\right)^2+\left(\left(\frac{u^{i\pm 1j}-u^{ij}}{\la}\right)^2- b^2\right)^2 \right. \nonumber\\
&&\left.+\left(\left(\frac{u^{ij\pm 1}-u^{ij}}{\la}\right)\cdot\left(\frac{u^{i\pm 1j}-u^{ij}}{\la}\right)\right)^2\right]\times\nonumber\\
&\times&\left[\left(\left(\frac{u^{ij\pm 1}-u^{ij}}{\la}\right)^2-  b^2\right)^2+\left(\left(\frac{u^{i\pm 1j}-u^{ij}}{\la}\right)^2- a^2\right)^2+ \right.
\nonumber\\ && \left.+\left(\left(\frac{u^{ij\pm 1}-u^{ij}}{\la}\right)\cdot\left(\frac{u^{i\pm 1j}-u^{ij}}{\la}\right)\right)^2\right ],
\label{HD}
\end{eqnarray}
where the parameters $a,\,b\in\mathbb{R}^+, a\neq b$, are chosen such that $ab=1$ (this corresponds to volume preserving transformations). 
In the above definition and below, we use a summation agreement: the sign $\pm$ in a term indicates that the latter should be replaced by the sum of the terms with all possible sign combinations, e.g.
\begin{eqnarray*}
\left((u^{ij\pm 1}-u^{ij})^2-(\la  a)^2\right)^2 &:= &\left((u^{ij + 1}-u^{ij})^2-(\la  a)^2\right)^2 \\
&& +\left((u^{ij - 1}-u^{ij})^2-(\la  a)^2\right)^2,\\
(u^{ij\pm 1}-u^{ij})\cdot(u^{i\pm 1j}-u^{ij})&:=&(u^{ij-1}-u^{ij})\cdot(u^{i-1j}-u^{ij})\\
&&+(u^{ij-1}-u^{ij})\cdot(u^{i+1j}-u^{ij})\\
&&+(u^{ij+1}-u^{ij})\cdot(u^{i-1j}-u^{ij})\\
&&+(u^{ij+1}-u^{ij})\cdot(u^{i+1j}-u^{ij}).
\end{eqnarray*}
We remark that, in particular, our functional (\ref{HD}) satisfies a condition similar to (H3):
\begin{align*}
H_n(u) \gtrsim \min\{\dist(\nabla u,SO(2)U_{0}\cap SO(2)U_1)^2, \dist(\nabla u,SO(2)U_{0}\cap SO(2)U_1)^4\}. 
\end{align*}
As will become evident from our proof of Theorem \ref{T1}, we are mainly
interested in the behavior of the Hamiltonian on a bounded set in gradient
space. Hence, for this Hamiltonian the lower bound effectively turns into a
quartic estimate (with $p=4$) with respect to the distance function. 
Thus, our special Hamiltonian (\ref{HD}) essentially satisfies the growth bounds required for the class of Hamiltonians defined via (H1)-(H4).\\
Moreover, the zeros of the first and second square brackets in (\ref{HD}) 
are given by all possible rotations of two rank-one connected affine deformations that are produced by the transformation matrices
\begin{equation}
U_0:=\left[\!\!\begin{array}{cc}
a&0\\
0&b\end{array}\!\!\right]\ \text{and}\ \ U_1:=\left[\!\!\begin{array}{cc}
b&0\\
0&a\end{array}\!\!\right], \nonumber
\end{equation}
respectively. Each matrix within one of the wells $SO(2)U_{0}$ or $SO(2)U_{1}$ is connected via two rank-one connections with the respective other well: 
There exist $Q, \tilde{Q}\in SO(2)$ such that 
\be
U_0-QU_1=\sqrt{2}\frac{a^2-b^2}{a^2+b^2}\begin{pmatrix} a\\-b \end{pmatrix}\otimes\frac{1}{\sqrt{2}}\begin{pmatrix}1\\ 1\end{pmatrix},\ \ 
U_0-\tilde{Q}U_1=\sqrt{2}\frac{a^2-b^2}{a^2+b^2}\begin{pmatrix} a\\b \end{pmatrix}\otimes\frac{1}{\sqrt{2}}\begin{pmatrix}1\\ -1\end{pmatrix}.
\lb{ROC}
\ee
Thus, it is possible for the material to form twins along these normals. 
We remark that for a general Hamiltonian satisfying (H1)-(H4) there is no restriction to assume that $U_0$ and $U_1$
are of the described form as an appropriate transformation reduces the general situation to this case. In the sequel we concentrate on this setting.\\  

Motivated by the structure of the wells and the example (\ref{HD}), we further restrict the class of deformations which we study.
For $\tau:=\begin{pmatrix}-a\\ b\end{pmatrix}$ we consider in this paper an additional constraint $u\in \mA_{n,\tau}$, where
\bea
\mA_{n,\tau}&:=&\Big\{u\in\mA_n\Big|\ u^{i+1j}-u^{ij+1}=-\la\tau^{i+j+1}\ \text{for all}\ (i,j)\in \Omega_{n}\Big\}\nonumber\\
&&\quad\text{where}\quad\tau^i\in SO(2)\tau\quad\text{for all}\quad i\in [-n,\,n].
\lb{Ant}
\eea
This implies that $u_n$ is represented via a 1D atomic chain on which the $i$-th atom on the base layer $j=0$ is  (vertically)
extended $n$  (and $-n$) times in the direction of the corresponding vector $\la\tau^i$, 
which depends on the horizontal position $i$. For appropriate boundary conditions (see details below), this is a reasonable assumption as in this case one expects the ground states of any Hamiltonian satisfying (H1)-(H4) to stay locally close to 
laminar configurations formed by pairs of the two martensitic variants.
Note that the particular case of an atomic chain extended uniformly by the vector
$\tau$ in the vertical direction is included in the definition of $ \mA_{n,\tau}$. \\

\begin{figure}[t]
\centering
\includegraphics[scale=0.8]{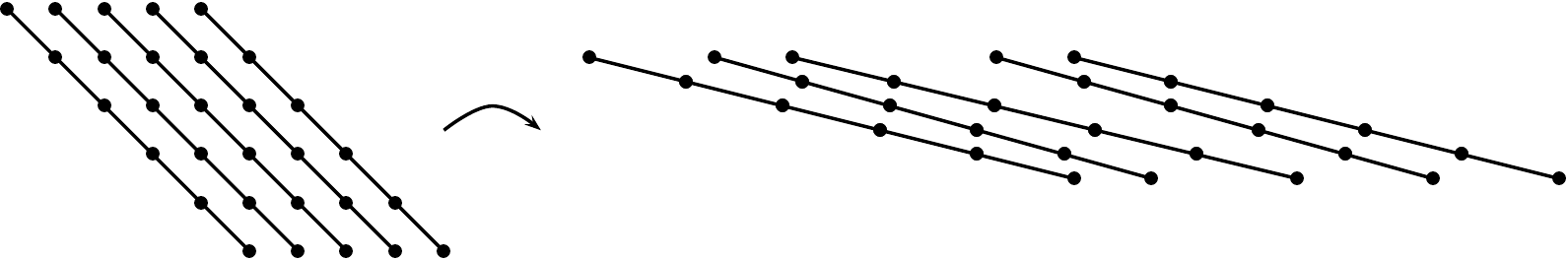}
\caption{An example of a constrained deformation.}
\end{figure}

Returning to our model case, the restriction $u\in \mA_{n,\tau}$ allows us to reduce (\ref{HD}) to a function on the generating
1D chain in the parallelogram $\Omega_n$. More precisely, in this case
\begin{equation}
\begin{split}
H_n(u)=&\lambda_{n}^2  \sum_{i,j=-n}^{n}h\left(\frac{u^{i}_n - u^{i\pm 1}_n}{\la}, \tau^i_n, \tau_n^{i\pm 1}, j\right)\\ 
:=& \lambda_{n}^{2}  \sum_{i,j=-n}^{n}\left [\left(\left(\frac{u^{i\pm 1}-u^{i}}{\la}+(j\pm 1)\delta_n^{i\pm 1}\pm\tau_n^i\right)^2-  a^2\right)^2+\right.\\ 
& \ \ \ +\left(\left(\frac{u^{i\pm 1}-u^{i}}{\la}+j\delta_n^{i\pm 1}\right)^2- b^2\right)^2\\
& \ \ \ +\left.\left(\left(\frac{u^{i\pm 1}-u^i}{\la}+(j\pm 1)\delta_n^{i\pm 1}\pm\tau_n^i \right)\cdot\left(\frac{u^{i\pm 1}-u^{i}}{\la}+j\delta_n^{i\pm 1}\right)\right)^2\right]\\ 
& \ \times \left[\left(\left(\frac{u^{i\pm 1}-u^{i}}{\la}+(j\pm 1)\delta_n^{i\pm 1}\pm\tau_n^i\right)^2-  b^2\right)^2 \right.\\
& \ \ \ +\left(\left(\frac{u^{i\pm 1}-u^{i}}{\la}+j\delta_n^{i\pm 1}\right)^2-  a^2\right)^2\\
& \ \ \ \left.+\left(\left(\frac{u^{i\pm 1}-u^i}{\la}+(j\pm 1)\delta_n^{i\pm 1}\pm\tau_n^i\right)\cdot\left(\frac{u^{i\pm 1}-u^{i}}{\la}+j\delta_n^{i\pm 1}\right)\right)^2\right],
\label{DC}
\end{split}
\end{equation}
where we denoted the atoms of the generating 1D chain by  $u^i:=u(i\la,0)$ for
$i\in [-n,n]\cap\Z$ and the corresponding discrepancy between neighboring shift vectors
by $\delta^{i\pm 1}_n:=\tau^{i\pm 1}_n-\tau^i_n$. 
We remark that for an arbitrary Hamiltonian satisfying (H1)-(H4) the reduction to atomic chains, i.e. $u\in \mA_{n,\tau}$, follows analogously. Moreover, we point out that, in passing to the atomic chains, we also restrict the underlying (deformed) domain to the previously introduced parallelograms $\Omega$ and $\Omega_n$.\\

In this paper we investigate global minimizers of the reduced Hamiltonian (\ref{DC})
among all deformations $u\in \mA_{n,\tau}$ satisfying  {\it Dirichlet boundary conditions} prescribed by a certain linear deformation having gradient $F\in\mR^{2\times2}$ with $\mathrm{det}(F)=1$.  More precisely, we assume that $u$ may
be extended to the whole lattice $\mathbb{Z} \times [-n,n]$ such that for the generating chain it holds
\bes
u^i=F\left(\begin{array}{c}i\la\\0\end{array}\right)\ \text{and}\ \tau^i\equiv\tau\ \text{if}\  i\le -n\ \ \text{and}\ \ \text{if}\ i\ge n.
\ees
As we are mainly interested in the emergence of surface energy contributions, we do not consider the full class of possible boundary conditions. Instead, we restrict our attention to the case of linear boundary data leading to zero bulk energy contributions in the continuum limit. Applying the results of~\cite{AC04, BS12}
one can prove the existence of the zero order $\Gamma$-limit for $H_{n}$. Due to these results on the derivation of continuum limits, the zero set of the continuum elastic energies, on the one hand, contain at least $(SO(2)U_0\cup SO(2)U_1)^{qc}$ -- the quasiconvexification of the wells --
as the resulting continuum limits are determined by a non-negative, quasiconvex energy density.  

On the other hand, as $\dist(x,K)\geq \dist(x,K^{qc})$ for each arbitrary set $K\subset \mR^{2\times 2}$, one deduces
\begin{eqnarray}
H_{n}(u_n) &&= \sum\limits_{i,j=-n}^{n}\lambda_n^2  h\left(\frac{u^{ij}-u^{i\pm 1j}}{\lambda_{n}},\frac{u^{ij}-u^{ij \pm 1}}{\lambda_{n}}\right)\nonumber\\
&& \gtrsim\sum\limits_{i,j=-n}^{n} \lambda_{n}^2 \dist(\nabla u_n^{ij}, SO(2)U_{0}\cup SO(2)U_{1})^p\nonumber\\
&& \gtrsim \int\limits_{\Omega}\dist(\nabla u_n,  SO(2)U_{0}\cup SO(2)U_{1})^pdx\nonumber\\
&& \gtrsim \int\limits_{\Omega}\dist(\nabla u_n,  (SO(2)U_{0}\cup SO(2)U_{1})^{qc})^pdx,
\lb{Es1}
\end{eqnarray}
as a result of (H3) and (H4).
The estimate \rf{Es1} then implies that the  zero set of the zero order  $\Gamma$-limit  is exactly given by $(SO(2)U_{0}\cup SO(2)U_{1})^{qc}$.

Within $(SO(2)U_0\cup SO(2)U_1)^{qc}$ affine boundary conditions inducing twin configurations with zero bulk energy contributions and
prescribed chain direction, $\tau$, are associated with deformation gradients of the form
\begin{align}
\label{eq:newBC}
F_{\lambda} = (1-\lambda) U_{0} + \lambda QU_{1}, \;\; \lambda\in [0,1],
\end{align}
where $Q\in SO(2)$ corresponds to the rotation from \rf{ROC} and
\begin{align*}
F_{\lambda}\begin{pmatrix} 1\\-1 \end{pmatrix} = \begin{pmatrix} a \\-b \end{pmatrix} = -\tau.
\end{align*} 
Thus, it turns out to be convenient to introduce a subspace of $\mathcal{A}_{n,\tau}$ 
which incorporates these (Dirichlet) data into our class of functions:
For $F_{\lambda}\in \mathbb{R}^{2\times 2}$ as above, we define
\begin{align}
\label{eq:BData}
\mA_{n,\tau}^{F_{\lambda}}  := \mA_{n,\tau}\cap \{u: \Omega_n \rightarrow \mathbb{R}^2 \big| u = F_{\lambda}x\mbox{ on } \partial_x\Omega_n\}
\end{align}
In the sequel, we investigate the
limiting behavior of minimizers ($n \rightarrow \infty$) in the class (\ref{eq:BData})
as well as the emergence of surface energy contributions.

\section{Rigidity of Minimizers and Limiting Form for the Surface Energy}

Our first main theorem shows that minimizing sequences to (\ref{DC})
considered with boundary conditions prescribed by (\ref{eq:newBC}) converge to piecewise affine deformations.
On each of the continuity subintervals of its gradient, the respective deformation coincides with a rotation of one of the two martensitic variants, 
i.e. it corresponds to one of the transformations in $SO(2)U_0\cup SO(2)U_1$. The rotations occurring in the rigidity result are not arbitrary: The gradients of the deformation have to satisfy a rank-one condition along the $\begin{pmatrix}1\\ 1 \end{pmatrix}$ normal direction and therefore the rotations have to coincide either with $\mathrm{Id}$ or $Q$. Although our statements are, for convenience, formulated for the Hamiltonian (\ref{HD}), our arguments do not use the specific properties of this Hamiltonian. Hence, the results remain true for the respective 1D atomic chains corresponding to any Hamiltonian satisfying the conditions (H1)-(H4).
\begin{theorem}
Let $F_{\lambda}$, $\lambda\in [0,1]$, be as above. Let  $\{u_n\}_{n \in \N} \subset\mA_{n,\tau}^{F_{\lambda}}$ be a sequence of
minimizers of (\ref{DC}). Then there exists a
number $K\in\N$ such that (for a not-relabeled subsequence)
\begin{itemize}
\item[(i)] $u_n\rightarrow u\ \text{in}\ W^{1,4}(\Omega,\mathbb{R}^2)$,
\item[(ii)] for each $s\in \{1,...,K-1\}$ there exist $m_s\in\{0,1\}$, $x_s\in [-1,1]$ such that
\begin{equation*}
\triangledown u(z) \equiv Q^{m_{s}}U_{m_s}
\end{equation*}
$\text{ for } z \in \Omega(x_s,x_{s+1})$
where $Q^0:=\mathrm{Id},\quad Q^1:=Q$ and $x_1=-1,\,x_K=1$.
\item[(iii)] 
$\bigcup\limits_{s=1 }^{K-1}[x_{s},x_{s+1}] = [-1,1].$
\end{itemize}
\label{T1}
\end{theorem}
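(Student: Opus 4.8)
The plan is to combine the discrete lower bound from (H3), (H4) with the Friesecke--James--M\"uller geometric rigidity estimate in $L^p$ (with $p=4$) in order to upgrade the weak compactness coming from the coercive energy bound to strong $W^{1,4}$ convergence, and then to exploit the one-dimensional structure imposed by the chain constraint $u\in\mA_{n,\tau}$ together with the rank-one compatibility along the $(1,1)^T$ normal to pin down the limiting gradient. First I would fix a recovery/competitor deformation built from the twin construction \eqref{eq:newBC} (a simple laminate between $U_0$ and $QU_1$ with one interface), whose energy is $O(\lambda_n)$ since only the atoms in a bounded neighborhood of the single interface carry nonzero energy; this shows $H_n(u_n)\to 0$ along minimizers. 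Feeding $H_n(u_n)\to 0$ into the estimate \eqref{Es1}, and using $\dist(\cdot,(SO(2)U_0\cup SO(2)U_1)^{qc})$, gives $\nabla u_n\to$ a gradient field taking values in $(SO(2)U_0\cup SO(2)U_1)^{qc}$ in $L^4$; but the chain constraint forces $\nabla u_n$ to be, at each admissible triangle, a genuine element of one of the two wells up to an error controlled by the local energy, so in fact $\dist(\nabla u_n,SO(2)U_0\cup SO(2)U_1)\to 0$ in $L^4(\Omega)$.

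The second step is the rigidity/structure argument. Here the key is Proposition~3.1 (the density estimate for high-energy ``defect'' points obtained by the dimension-separation argument): it controls the number of horizontal positions $i$ at which the local energy exceeds a small threshold $\eta$, uniformly in $n$, so that there are at most finitely many such columns in the limit. Away from these finitely many columns, the generating chain sees the discrete gradient staying $\eta^{1/p}$-close to a single well $SO(2)U_{m}$ (by \eqref{H3r}), and on each such region the Friesecke--James--M\"uller rigidity theorem applied on the rescaled parallelogram $\Omega(x_s,x_{s+1})$ yields a single rotation $R_s$ with $\|\nabla u_n - R_s U_{m_s}\|_{L^4}\to 0$. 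Passing to the limit, $\nabla u$ is piecewise constant, equal to $R_s U_{m_s}$ on $\Omega(x_s,x_{s+1})$, with the breakpoints $x_s$ being (a subset of the limits of) the defect columns; items (iii) and the finiteness of $K$ follow from the uniform density bound, and item (i) follows by summing the strong convergence over the finitely many pieces and the controlled remainder near the interfaces.

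Finally I would pin down the rotations. The chain constraint \eqref{Ant}, namely $u^{i+1j}-u^{ij+1}=-\lambda\tau^{i+j+1}$ with $\tau^i\in SO(2)\tau$, together with the Dirichlet datum $F_\lambda$ matching $\tau$ on $\partial_x\Omega_n$, forces the deformation gradient to satisfy a rank-one (Hadamard) jump condition across every vertical interface with normal $\tfrac{1}{\sqrt 2}(1,1)^T$: if $R_s U_{m_s}$ and $R_{s+1}U_{m_{s+1}}$ are the constant gradients on two adjacent pieces then their difference must be rank one with that normal. Using \eqref{ROC} — which lists, for each well element, exactly the two rank-one neighbors in the other well, one with normal $(1,1)^T$ and one with normal $(1,-1)^T$ — the only admissible gradients compatible with the $(1,1)^T$ normal and with the prescribed $\tau$ are $\mathrm{Id}\cdot U_0=U_0$ and $QU_1$; hence $R_s\in\{\mathrm{Id},Q\}$ and correspondingly $\nabla u\equiv Q^{m_s}U_{m_s}$ on each piece, which is exactly (ii). The main obstacle I expect is the second step: extracting, \emph{uniformly in} $n$, a genuine single rotation on each bulk region rather than merely a gradient valued in the (nontrivial) quasiconvex hull — this is precisely where $L^\infty$ arguments fail and where one must run the FJM $L^p$ rigidity estimate in tandem with the Proposition~3.1 density bound, and care is needed to ensure the non-interpenetration constraint is respected when one cuts the domain into pieces and glues the rigidity estimates together.
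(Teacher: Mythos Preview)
Your overall architecture---competitor for the upper bound, density control of high-energy columns, FJM rigidity on the good pieces, then pinning rotations via rank-one compatibility---is exactly the paper's route. Two points where your outline diverges from (or is vaguer than) what actually has to be done:

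\textbf{Circularity and the content of the density estimate.} You invoke ``Proposition~3.1'' as a black box, but in the paper Proposition~3.1 is the compactness statement whose proof is declared to be identical to that of Theorem~\ref{T1}; so you are citing the theorem you are proving. More importantly, your description of what is controlled is off: the number of horizontal positions with local energy above a \emph{small} threshold $n^{-\alpha}$ is \emph{not} uniformly bounded---it grows like $n^{\alpha}$ (these are the points $y^n_{s,l}$ in the paper). What \emph{is} bounded independently of $n$ is the number of positions where the energy exceeds a \emph{fixed} constant $\tilde{c}$ tied to the well separation; these are the candidate jump points $x^n_s$. The paper gets this uniform bound via \eqref{eq:comp11} and a pigeonhole on three well-chosen horizontal layers $j^n_{-1},j^n_0,j^n_1$ (equations \eqref{jD}--\eqref{BP2}).

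\textbf{Vertical consistency of the well.} You write that away from the bad columns ``the discrete gradient stays $\eta^{1/p}$-close to a single well $SO(2)U_m$''. But a priori the low-energy condition only says the gradient is close to \emph{one} of the two wells at each lattice point; nothing prevents it from oscillating between $SO(2)U_0$ and $SO(2)U_1$ as you move vertically along a fixed column. Ruling this out is the heart of the paper's Step~2: one shows that if a ``simultaneously good'' column were near $SO(2)U_0$ at layer $j^n_0$ and near $SO(2)U_1$ at layer $j^n_1$, then the chain structure \eqref{Rel0} forces the gradient at $j^n_1$ to be a rank-one perturbation of that at $j^n_0$, and the explicit algebra \eqref{Rel}--\eqref{Rel2} (using that $j^n_1-j^n_0=j^n_0-j^n_{-1}$ and that a rank-one line between the wells meets the other well exactly once) produces a contradiction. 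Only after this do you know the well is constant along each good column, which is what feeds into FJM. Your final remark about rotations via Hadamard jumps across interfaces is fine (the paper instead argues via continuity of $u$ and rank-one connection to $F_\lambda$; both work).
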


\begin{remark}
The choice of $p=4$ in (i) is arbitrary; in fact our proof shows that it is possible to deduce $u_n \rightarrow u$ in $W^{1,p}$ for any $p\in (1,\infty)$.
\end{remark}

\begin{remark}
Before proceeding with the proof, we comment on its structure. As in most similar proofs, we first construct a comparison function in order to obtain an upper bound on the energy. We then crucially exploit the two-well structure of the Hamiltonian and the geometry of the rank-one connections between the wells (c.f. Step 2). Here the key observation is that in matrix space any rank-one line between the wells only intersects the respective other well once. This allows to extend the control from certain particular horizontal layers $j_{-1}^n, j_{0}^n, j_{1}^n$ to the whole vertical stripe between them, c.f. the calculations following \rf{ContAs}.
This information then allows to apply the Friesecke-James-M\"uller rigidity theorem \cite{FJM05}.
\end{remark}

\begin{proof}
\textit{Step 1: Constructing an appropriate comparison function.} Below $c>0$ denotes a constant that may vary from line to line but does not depend on any
parameter of the problem. We first consider the following piecewise affine comparison function:
\begin{equation}
\label{eq:comp2}
\begin{split}
u(z) = 
\left\{
\begin{array}{ll}
F_{\lambda}z  & \mbox{for } z \in\Big \{\begin{pmatrix} x \\ 0 \end{pmatrix} + \frac{s}{\sqrt{2}}\begin{pmatrix} -1 \\ 1 \end{pmatrix},\,x\in (-\infty,-1],\,s\in \Rl \Big \}\cap \Omega,\\ 
U_{0}z + c_{1}, & \mbox{for } z \in \Big \{\begin{pmatrix} x \\ 0 \end{pmatrix} + \frac{s}{\sqrt{2}}\begin{pmatrix} -1 \\ 1 \end{pmatrix}, \,x\in (-1,1-2\lambda),\,  s\in \Rl \Big \}\cap \Omega,\\
QU_{1}z + c_{2}, & \mbox{for } z \in \Big \{\begin{pmatrix} x \\ 0 \end{pmatrix} + \frac{s}{\sqrt{2}}\begin{pmatrix} -1 \\ 1 \end{pmatrix},\,x\in (1-2\lambda,1),\,   s\in \Rl\Big \}\cap \Omega,\\
F_{\lambda}z & \mbox{for } z\in \Big \{\begin{pmatrix} x \\ 0 \end{pmatrix} + \frac{s}{\sqrt{2}}\begin{pmatrix} -1 \\ 1 \end{pmatrix},\,x\in [1,\infty),\,s\in \Rl \Big \}\cap \Omega,
\end{array} 
\right.
\end{split}
\end{equation}
where the constants $c_{1}, c_{2} \in \mathbb{R}^2$ are chosen such that the resulting function $u(z)$ is continuous (which is possible due to the rank-one connections between the wells). 
Let $\{ u_n \}_{n\in\N}$ be a minimizing sequence corresponding to the energies $H_n(\cdot)$ considered with the boundary condition \rf{eq:newBC}. Then 
\begin{align}
\label{eq:comp}
H_n(u_n) \leq H_n(u) \leq  c\la.
\end{align}
We consider the rescaled Hamiltonian (which corresponds to surface energy contributions originating from the boundaries and interfaces):
\begin{equation}
H_n^1(u_n):= \frac{H_n(u_n)}{\la}.
\lb{SE}
\end{equation}
From (\ref{eq:comp}) we observe
\begin{align}
\label{eq:comp11}
\la\sum_{i,j=-n}^{n}h\left(\frac{u^{i}_n - u^{i\pm 1}_n}{\la},\tau^i_n,\,\tau_n^{i\pm 1}, j\right)\leq c.
\end{align}

\textit{Step 2: Estimating the number of the jumps between the wells.}
Let $0<\alpha<1$. Then the energy bound \rf{eq:comp11} yields control on the energy per horizontal line and on the energy per lattice point:
\sbea
&&\#\left\{j:\ \la\sum_{i=-n}^{n}h\left(\frac{u^{i}_n - u^{i\pm 1}_n}{\la},\tau^i_n,\,\tau_n^{i\pm 1}, j\right)\ge n^{-\alpha}\right\}\lesssim n^{\alpha}, \label{stratN}\\
&&\#\left\{(i,j):\ h\left(\frac{u^{i}_n - u^{i\pm 1}_n}{\la},\tau^i_n,\,\tau_n^{i\pm 1}, j\right)\ge n^{-\alpha}\right\}\lesssim n^{1+\alpha}.\label{eq:F}
\seea
In particular, for any sufficiently small $\delta>0$, \rf{stratN} and \rf{eq:F} imply 
that for sufficiently large $n\in\N$ there exist a large number of horizontal lines with good energy estimates, i.e. it is possible to find $j_{-1}^n,j_0^n,j_1^n$ such that
\bea
\lb{jD}
&&j_{-1}^n\in[-n,-n+ 2\delta n],\ j_0^n\in[-\delta n,\delta n],\ j_{1}^n\in[n-2\delta n,n],\\
&&\la\sum_{i=-n}^{n}h\left(\frac{u^{i}_n - u^{i\pm 1}_n}{\la},\tau^i_n,\,\tau_n^{i\pm 1}, j_l^n\right)\lesssim n^{-\alpha},\label{eq:good1}\\
&&\#\left\{i\in [-n,n]:\ h\left(\frac{u^{i}_n - u^{i\pm 1}_n}{\la},\tau^i_n,\,\tau_n^{i\pm 1}, j_l^n\right)\ge
 n^{-\alpha}\right\}\lesssim\delta^{-1}n^{\alpha},
\lb{BP1}
\eea 

for $l\in\{-1,0,1\}$.
This is a consequence of the following observations:
\begin{itemize}
\item Due to \rf{stratN} and $\alpha<1$, the number of $j\in [-n,n]$ violating \rf{eq:good1} is smaller than $\frac{\delta}{2}n$ if $n$ is sufficiently large.
\item If \rf{BP1} were wrong, for example, for all $j\in [-n,-n+2\delta n]$, this would entail that for any constant $c>0$
\begin{equation*}
\#\left\{(i,j):\ h\left(\frac{u^{i}_n - u^{i\pm 1}_n}{\la},\tau^i_n,\,\tau_n^{i\pm 1}, j\right)\ge n^{-\alpha}\right\} >\delta^{-1}cn^{\alpha}2\delta n
\end{equation*}
which contradicts \rf{eq:F}.
\end{itemize}
Setting
$ \tilde{c}=\left(\frac{b^2-a^2}{100(a^2+b^2)}\right)^4$ (note that in the general case of a Hamiltonian with $p$-growth a similar choice can be made),
we observe that we may further choose $j_{-1}^n,j_0^n,j_1^n$ such that there exists a number $M_{\delta}>0$, independent of $n$ with
\be
\#\left\{i\in [-n,n]:\ h\left(\frac{u^{i}_n - u^{i\pm 1}_n}{\la},\tau^i_n,\,\tau_n^{i\pm 1}, j_l^n \right)\ge\tilde{c}\right\}\le M_{\delta}\quad\text{for}\quad l\in\{-1,0,1\}.
\lb{BP2}
\ee
Indeed, if this were false for all lines in the intervals given by (\ref{jD}), this would imply that for any $M >0$ and all sufficiently large $n$ it holds:
\begin{align*}
\la\sum_{i,j=-n}^{n}h\left(\frac{u^{i}_n - u^{i\pm 1}_n}{\la},\tau^i_n,\,\tau_n^{i\pm 1}, j\right)\geq  \frac{1}{n} M \delta n = M \delta.
\end{align*}
Taking $M$ such that $M \delta>c$, would then contradict \rf{eq:comp11}. 
Therefore, choosing $M_{\delta}$ sufficiently large, the fraction of lines satisfying (\ref{BP2}) becomes sufficiently large in order to
find horizontal lines satisfying \rf{jD}, \rf{eq:good1}, \rf{BP1} and \rf{BP2} simultaneously. Moreover, by density considerations similar 
to the previous ones and by potentially enlarging the constants in \rf{jD}, \rf{eq:good1}, \rf{BP1} and \rf{BP2} by a factor $\sim \frac{100}{\delta}$, 
we may additionally with out lost of generality assume that $j^{n}_1-j^n_0 = j^n_0 - j^{n}_{-1}$. This follows from the facts that
\begin{itemize}
\item equation (\ref{eq:good1}) is satisfied by $n- cn^{\alpha}$ choices of $j\in [-n,n]$,
\item equations (\ref{BP1}) and \rf{BP2} hold for $n-\epsilon n$ different choices of $j\in [-n,n]$,
\end{itemize}
and the observation that the constant $\epsilon$ can be made arbitrarily small by choosing the constants in the respective estimates sufficiently large. We stress that the points at which (\ref{BP2}) holds are the only places along the horizontal lines $\{j_{-1}^n,j_0^n,j_1^n\}$ at which jumps \emph{between} the wells can occur (later we will see that these points and their vertical extensions are in fact \emph{globally} the only points at which such large jumps may occur).
Due to the uniformity in $n$, estimate (\ref{BP2}) will play a crucial role in controlling the location and number of the large jumps, c.f. step 3.\\
As a last immediate consequence of the upper estimate  on the energy (\ref{eq:comp11}),
we deduce an $L^{\infty}$ bound on minimizing configurations which is uniform in $n$: Indeed, (\ref{eq:comp11}) directly implies
\begin{align*}
\lambda_n \sum\limits_{j=-n}^n h\left(\frac{u^{i}_n - u^{i\pm 1}_n}{\la},\tau^i_n,\,\tau_n^{i\pm 1}, j\right) \leq c.
\end{align*}
Choosing a constant $C_1$ of the size $C_1 \geq 100 c$, allows to conclude that for a fixed 
$i\in[-n,n]$ not more than one percent of all points $(i,j)$ have a local energy exceeding $C_1$.
Hence, for each $i\in (-n,n)$ it is possible to find vertical atoms of distance $\sim 2n$
such that their local energy, $h\left(\frac{u^{i}_n - u^{i\pm 1}_n}{\la},\tau^i_n,\,\tau_n^{i\pm 1}, j\right)$,
is less than $C_1$. Due to the chain structure of the Hamiltonian, the local energy of any vertical point $j$ is then bounded by $3C_1$. As a consequence, we obtain
\begin{align}
\label{eq:Linf}
|\nabla u^{ij}_n| \lesssim c < \infty \mbox{ uniformly for } n\in \N \mbox{ and }i,\,j\in \{-n,...,n\}.
\end{align}

\begin{figure}[t]
\centering
\includegraphics{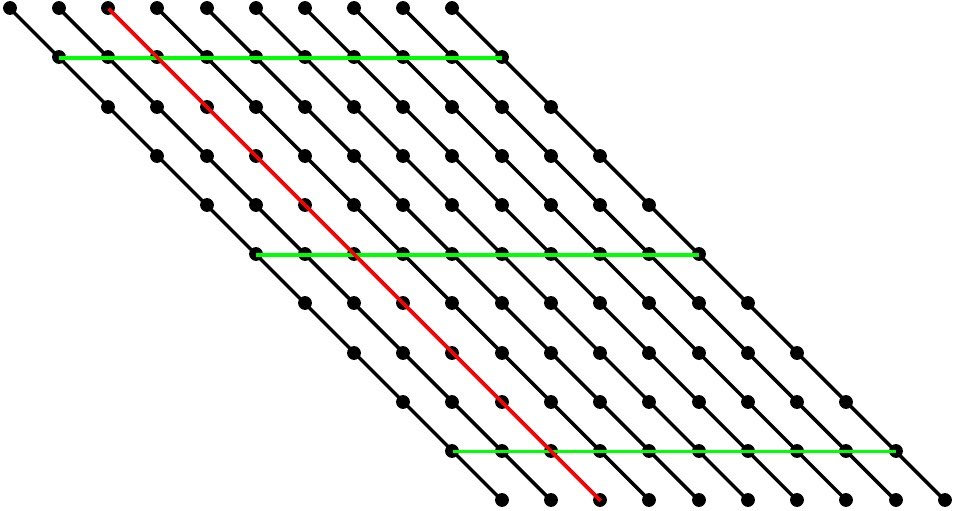}
\caption{In estimating the energy, we consider three ``good'' layers $j=_0^n,\,j_{\pm1}^n$ which are indicated in green. 
Along the vertical lines (e.g. along the red line) the 1D chain structure implies that the gradient changes along rank-one directions in the matrix space.}
\end{figure}

We proceed by considering the points having low energy and show that for a large fraction of points the deformation is already ``approximately laminar''.
For this, we define a position $i\in [-n,n]$ on the horizontal lines, to be ``simultaneously good'' for  $j_{-1}^n,j_0^n,j_1^n$  if
\bes
h\left(\frac{u^{i}_n - u^{i\pm
    1}_n}{\la},\tau^i_n,\,\tau_n^{i\pm 1},
j_l^n\right)\le  n^{-\alpha}\quad\text{holds for all}\quad l\in\{-1,0,1\}.
\ees
Note, that due to \rf{BP1} the number of $i\in [-n,n]$ which are not ``simultaneously good'' is $\lesssim\delta^{-1}n^{\alpha}$.
Property \rf{H3r} implies that for each ``simultaneously good'' position $i\in [-n,n]$, one has
\be
(\dist(\nabla u_n^{i-j_l^n j_l^n},SO(2)U_{0})\lesssim n^{-\alpha/4})\ \mbox{or}\ (\dist(\nabla u_n^{i-j_l^n j_l^n},SO(2)U_1)\lesssim n^{-\alpha/4})
\lb{Ms2}
\ee
for $l\in\{-1,0,1\}$.

Next, we claim that due to the two-well structure of our Hamiltonian a stronger property is satisfied:
For sufficiently large $n$ and each fixed ``simultaneously good'' $i$ exactly one of the inequalities in \rf{Ms2} holds
for \emph{all} $l\in\{-1,0,1\}$. This will then imply that it is impossible to switch between the wells along the vertical direction
if one starts at a ``simultaneously good'' position.  
In order to prove this claim, we argue by contradiction. Let us fix a ``simultaneously good''
$i\in [-n,n]$ and assume, for example, that the following holds
\be
(\dist(\nabla u_n^{i-j_0^n j_0^n},SO(2)U_{0}) \lesssim n^{-\alpha/4})\ \mbox{and}\ (\dist(\nabla u_n^{i-j_1^n j_1^n},SO(2)U_{1}) \lesssim n^{-\alpha/4}).
\lb{ContAs}
\ee
By definition \rf{DC} of our 1D chain, one has the following relations
\begin{equation}
\begin{split}
u_n^{i-j_1^n j_1^n\pm 1}-u_n^{i-j_1^n j_1^n}&=u_n^{i-j_0^n j_0^n\pm 1}-u_n^{i-j_0^n j_0^n}+(j_1^n-j_0^n)\la\delta_n^{i\pm 1},\\
u_n^{i-j_1^n\pm 1 j_1^n}-u_n^{i-j_1^n j_1^n}&=u_n^{i-j_0^n\pm 1 j_0^n}-u_n^{i-j_0^n j_0^n}+(j_1^n-j_0^n)\la\delta_n^{i\pm 1}.
\lb{Rel0}
\end{split}
\end{equation}
These, together with \rf{ContAs}, imply that if 
\bes
||\nabla u_n^{i-j_0^n j_0^n}- Q_{1,n}U_{0}||_{ C(\Omega_{i-j_0^nj_0^n})}\lesssim n^{-\alpha/4}\quad\text{for some}\quad  Q_{1,n} \in SO(2),
\ees
then necessarily there exists  $\gamma_n \in (-\pi/2,\pi/2)$ such that
\be
||\nabla u_n^{i-j_1^n j_1^n}- Q_{\gamma_n}Q_{1,n} U_{1}||_{C(\Omega_{i-j_1^nj_1^n})} \lesssim n^{-\alpha/4},
\lb{Rel1}
\ee
where $ Q_{\gamma_n}\in SO(2)$ denotes a rotation by the angle $\gamma_n$
and the following conditions on $(j_1^n-j_0^n)\delta_n^{i\pm 1}$ and $\gamma_n$ hold:
\begin{equation}
\label{Rel}
\begin{split}
\left|\sin(\gamma_n)-\left[\frac{a^2-b^2}{a^2+b^2}\right]\right|&\lesssim n^{- \alpha/4},\\
\left|(j_1^n-j_0^n) Q_{1,n}^{-1}\cdot\delta_n^{i\pm 1}+\begin{pmatrix}a\sin\gamma_n\\-b\sin\gamma_n\end{pmatrix}\right|&\lesssim n^{-\alpha/4}.
\end{split}
\end{equation}
The inequalities \rf{Rel1}--\rf{Rel} are connected to the fact that a deformation with gradient in
$SO(2)U_1$ can be obtained from one of the respectively associated rank-one connected matrices in $SO(2)U_0$ by a shear deformation. 

Next, let us suppose that 
\be
\dist(\nabla u_n^{i-j_{-1}^n j_{-1}^n},SO(2)U_{1})\lesssim n^{-\alpha/4}.
\lb{ContAs1}
\ee
Then, analogous to the derivation of \rf{Rel}, there exists an angle $\gamma_{1,n}$ such that
\bes
||\nabla u_n^{i-j_{-1}^n j_{-1}^n}-Q_{\gamma_{1,n}} Q_{ 1,n}U_1||_{C(\Omega_{i-j_{-1}^nj_{-1}^n})}\lesssim n^{-\alpha/4},
\ees
and the following relations hold between $\gamma_n$ and $\gamma_{1,n}$
\beas
\left|a\sin\gamma_{1,n}+a\sin\gamma_n\right|&\lesssim&n^{- \alpha/4},\nonumber\\
\left|2b-(a\cos\gamma_n+a\cos\gamma_{1,n})\right|&\lesssim&n^{- \alpha/4}.
\eeas
For sufficiently large $n$ and a correspondingly sufficiently small error $n^{-\alpha/4}$ the last two inequalities become incompatible with the
first one in \rf{Rel} as soon as $a\neq b$. Therefore, such an angle $\gamma_{1,n}$ cannot exist, which thus yields a contradiction to \rf{ContAs1}. Consequently,
\bes
\dist(\nabla u_n^{i-j_{-1}^n j_{-1}^n},SO(2)U_0)\lesssim n^{-\alpha/4}
\ees
holds. Proceeding analogous to the considerations in \rf{Rel0}-\rf{Rel1}
one concludes in this case that
\begin{align}
||\nabla u_n^{i-j_{-1}^n j_{-1}^n}-Q_{ 1,n}U_{0}||_{C(\Omega_{i-j_{-1}^nj_{-1}^n})}&\lesssim n^{-\alpha/4},\nonumber\\
\left|(j_0^n-j_{-1}^n)\delta_n^{i\pm 1}\right|& \lesssim n^{-\alpha/4}. \lb{Rel2}
\end{align} 
The last inequality, however, by \rf{Rel0} implies that
\bes
||\nabla u_n^{i-j_1^n j_1^n}-Q_{ 1,n}U_0||_{C(\Omega_{i-j_1^nj_1^n})}\lesssim n^{-\alpha/4}.
\ees
Therefore, one arrives at a contradiction to the starting assumption \rf{ContAs}. 
This and analogous considerations in the remaining cases, prove the claim for simultaneously good points. Finally, we conclude that the estimates can be extended along the whole vertical direction: 
For each ``simultaneously good'' $i\in [-n,n]$ there exists $ Q_{i,n}\in SO(2)$ such that either 
\begin{equation}
\begin{split}
||\nabla u_n^{i-j j}-Q_{i,n}U_{0}||_{C(\Omega_{i-jj})}&\lesssim n^{-\alpha/4}\quad\text{for all}\quad j\in [-n,n],  \\
\ \ \mbox{ or } ||\nabla u_n^{i-j j}-Q_{i,n}U_{1}||_{C(\Omega_{i-jj})}&\lesssim n^{-\alpha/4}\quad\text{for all}\quad j\in [-n,n].
\lb{Ms3}
\end{split}
\end{equation}
This follows from the previous argument and the second estimate in (\ref{Rel2}). Similar estimates can be shown under $p$-growth assumptions
on the two-well Hamiltonian.\\

\textit{Step 3: Jumps within the wells and the FJM rigidity theorem.}
As a consequence of \rf{Ms3} and (\ref{eq:Linf}) we have,
 $||u_{n}||_{W^{1,p}(\Omega)}\leq C$ for any $p\in (1,\infty)$. On passing to subsequences, we may therefore conclude that
\begin{itemize}
\item there exist $K\in\N$, $x_{1},...,x_{K}\in (-1,\,1)$ independent of $n$,
\item and for any $n$ there exist associated points
$x_{1}^{n},...,x_{K}^{n}\in (-1,\,1)$ and $y_{s,1}^{n},...,y_{s, K_s^n}^{n} \in (x_{s}^{n},x_{s+1}^{n})$ 
\end{itemize}
such that
\begin{itemize}
\item $ x_{s}^{n} \rightarrow x_{s}, \; s\in \{1,...,K\},$
\item $u_n \rightharpoonup u \mbox{ in } W^{1,4}(\Omega), $ (here the choice $p=4$ is arbitrary),
\item as a consequence of \rf{Ms2} the following dichotomy holds in the interval $(x_{s}^{n},x_{s+1}^{n})$: For each $i$ with
$\la i\in (y_{s,l}^{n},y_{s,l+1}^{n}) \subset (x_{s}^{n},x_{s+1}^{n})$ and $l\in\{1,...,K_s^n\}$, either
\begin{align}
\label{eq:wells}
& \dist(\nabla u_n^{i-j j}, SO(2)U_{0})\lesssim n^{-\alpha/4}\quad  \mbox{ or }
\dist(\nabla u_n^{i-j j}, SO(2)U_{1})\lesssim n^{-\alpha/4}
 \end{align}
for all $j\in [-n,n]$. 
\end{itemize}
Here the points $x_s$, $s\in \{1,...,K\}$, are obtained by taking a limit of an appropriate subsequence of the rescaled points $\frac{i}{n}$ from the set in (\ref{BP2}). Similarly, the points $y_{s,l}^n$ are obtained from the points which are not simultaneously good, i.e. at least one of the points on the vertical lines $j_{-1}^n,j_{0}^n$ or $j_{1}^n$ carries a local energy of less than $\tilde{c}$ but larger than $n^{-\alpha}$. Since the number of the latter points can possibly increase with growing $n$, there is no uniform a priori bound on it. 
However, our previous considerations on the simultaneously good points lying between these allows to deduce (\ref{eq:wells}).

Note, that if for a certain $l\in\{1,...,K_s^n\}$, a certain $\la i\in
(y_{s,l}^{n},y_{s,l+1}^{n}) \subset (x_{s}^{n},x_{s+1}^{n})$ 
and a certain $j\in [-n,n]$ one has $\dist(\nabla u_n^{i-j j}, SO(2)U_{0}) \leq   cn^{-\alpha/4}$, 
then this remains true for all $l\in \{1,...,K_s^n\}$, $\la i\in
(y_{s,l}^{n},y_{s,l+1}^{n})$  and  $j\in [-n,n]$. 
Indeed, firstly all atoms in $\Omega(x_{s}^n,x_{s+1}^n)$  lying on one of the horizontal slices
determined by $j_0^n,\,j_1^n,j_{-1}^n$ are in a
neighborhood of a single, common well, say $SO(2)U_{0}$. If this were not true, then
there would necessarily exist at least two internal neighboring points
$(i,j)$ and $(i+1,j)$ for some $j\in\{j_0^n,\,j_1^n,j_{-1}^n\}$ 
such that $\nabla u^{ij}$ and $\nabla u^{i+1j}$ belong to disjoint neighborhoods of the two wells $SO(2)U_0$ and $SO(2)U_1$. 
However, such points cannot exist as the atoms $(i,j)$ and $(i+1,j)$ have a
common edge and, by definition, have local energies smaller than $\tilde{c}$. In particular, there cannot be a jump of the required size at these points. 
Secondly, by the argument used in the Step 2 all atoms lying on the vertical
extension of simultaneously good points on the slices
$j_0^n,\,j_1^n,j_{-1}^n$ belong to the same well $SO(2)U_{0}$. By definition of $y_{s,l}^{n}$ the set of these
atoms  coincides with the set of ones having  $\la i\in
(y_{s,l}^{n},y_{s,l+1}^{n}) \subset (x_{s}^{n},x_{s+1}^{n})$  with
$l\in\{1,...,K_s^n\}$ and $j\in [-n,n]$ and the above statement follows.\\

\textit{Step 3a: Non-degenerate intervals.}
We distinguish two cases: In the first one we consider $s\in \{1,...,K\}$
such that $x_{s}\neq x_{s+1}$, 
in the second case the jump planes are allowed to collapse in the limit. 
We start discussing the first alternative. For a fixed $s\in \{1,...,K\}$ there exists $m_s\in\{0,1\}$ such that (\ref{eq:wells}) holds with a single matrix, $U_{m_s}$, for any $i$ with $\la i\in (y_{s,l}^{n},y_{s,l+1}^{n}) \subset (x_{s}^{n},x_{s+1}^{n})$, $l\in\{1,...,K_s\}$
(as there are no jumps between the different wells within $(x_{s}^{n},x_{s+1}^{n})$). Thus, we obtain 
\bes
\int\limits_{\Omega(x_{s}^n,x_{s+1}^n)}\dist(\nabla u_n,SO(2)U_{m_s})^4 dz \lesssim n^{-\alpha} + \frac{n^{\alpha}}{n}, 
\ees
as, by \rf{stratN}, the number of ``bad'' vertical stripes is controlled by $n^{\alpha}$, the
horizontal length of each stripe is given by $n^{-1}$ and its energy is bounded by a constant. Therefore, 
\bes
\int\limits_{\Omega(x_{s}^n,x_{s+1}^n)}\dist(\nabla u_n,SO(2)U_{m_s})^4 dz \rightarrow 0 \mbox{ as } n\rightarrow \infty.
\ees
Next, we apply the (non-linear) quantified Liouville ($L^p$-rigidity) theorem
of Friesecke-James-M\"uller~\cite{FJM05}: 
For each $n\in \N$ and each interval $(x_{s}^{n},x_{s+1}^{n})$, $s\in \{1,...,K-1\}$, there exist $V^{n}_s\in SO(2)U_{m_s}$ such that
\begin{align*}
\int\limits_{\Omega(x_{s}^{n},x_{s+1}^{n})}|\nabla u_n - V^{n}_s|^4\,dz \leq 
c \int\limits_{\Omega(x_{s}^{n},x_{s+1}^{n})}\dist(\nabla u_{n},SO(2)U_{m_s})^4 dz. 
\end{align*}
Note that the constant $c$ in the last estimate can be chosen uniformly on the domains $\Omega(x_{s}^{n},x_{s+1}^{n})$ if $x_s\ne x_{s+1}$.
Since $SO(2)U_{m_s}$ is compact, for each $s\in \{1,..., K\}$ there exists a subsequence $V^{n}_s\rightarrow V_s\in SO(2)U_{m_s}$. 
Due to weak lower semicontinuity, the last two estimates and the boundedness of $\nabla u_{n}$, we infer
\begin{align}
\int\limits_{\Omega(x_{s},x_{s+1})}|\nabla  u - V_{s}|^4 dz&\leq 
\liminf\limits_{n\rightarrow \infty} \int\limits_{\Omega(x_{s}^{n},x_{s+1}^{n})\cup\Omega(x_{s},x_{s+1})}|\nabla u_n - V^{n}_s|^4 dz\nonumber\\
&\leq \lim\limits_{l\rightarrow \infty}\int\limits_{\Omega(x_{s}^{n_l},x_{s+1}^{n_l})}|\nabla u_{n_l}-V^{n_l}_{s}|^4 dz+\nonumber\\
& \;\;\;\; + \lim\limits_{l\rightarrow \infty}\int\limits_{\Omega(x_{s},x_{s+1})\D \Omega(x_{s}^{n_l},x_{s+1}^{n_l})}|\nabla u_{n_l} - V_{s}^{n_l}|^4 \,dz\nonumber\\
& \leq c\liminf\limits_{l\rightarrow \infty} \int\limits_{\Omega(x_{s}^{n_l},x_{s+1}^{n_l})}\dist(\nabla u_{n_l},SO(2)U_{0})^4 \,dz\nonumber\\
&\;\;\;\;+ c\lim\limits_{l\rightarrow \infty} \mathcal{L}^{1}(\Omega(x_{s},x_{s+1})\D \Omega(x_{s}^{n_l},x_{s+1}^{n_l}))\nonumber\\
& = 0,
\lb{Ms1}
\end{align}
where $\{n_l\}\in\N$ denotes the subsequence which realizes the $\liminf$ in the second inequality.
As a consequence we deduce 
\begin{align*}
\nabla u = V_{s} = Q_{j}U_{m_s} \mbox{ in } \Omega(x_{s},x_{s+1}),
\end{align*}
for some fixed rotation $Q_{j}\in SO(2)$. In a similar way one can conclude that $u_n \go u \mbox{ in } W^{1,4}(Q(x_{s},x_{s+1}))$.
Note that similar estimates hold in the general case for  Hamiltonians satisfying (H1)--(H4) with a $p$-growth assumption.\\

\textit{Step 3b: Degenerate intervals.}
For the case of degenerate intervals we argue similarly. Assume that $x_{s}=x_{s+1}$ for some  $s\in \{1,...,K\}$ but $x_{s}^{n}\neq x_{s+1}^{n}$. 
Without loss of generality, let $x_{s-1}\neq x_{s} \neq x_{s+2}$.
Then, one obtains
\begin{align*}
\int\limits_{\Omega(x_{s-1},x_{s+1})}\dist(\nabla u_n, SO(2)U_{0})^4 dz \lesssim n^{-\alpha} + \frac{n^{\alpha}}{n}+\mathcal{L}^{1}((x_{s}^n,x_{s+1}^n)).
\end{align*}
Here the first energy contribution, $n^{- \alpha}$, originates from the ``good'' stripes $(y_{r,l}^{n},y_{r,l+1}^{n})$ 
with  $r\in \{s-1,s+1\}$ and $l\in \{1,..., K_r^n\}$, respectively. The second contribution comes from the jumps between the stripes
 $(y_{s,l}^{n},y_{s,l+1}^{n})$ and the third one is a simple consequence of
estimate \rf{eq:comp11} reduced to the interval $(x_{s}^{n},x_{s+1}^{n})$. 
We invoke the rigidity estimate on the interval $(x_{s-1}^{n},x_{s+1}^{n})$: There exists $V^{n}_s\in SO(2)U_{m_s}$ such that
\begin{align*}
\int\limits_{\Omega(x_{s-1}^{n},x_{s+1}^n)}|\nabla u_n - V^{n}_s|^4 dz & \leq 
c\int\limits_{\Omega(x_{s-1}^{n},x_{s+1}^{n})}\dist(\nabla u_n,SO(2)U_{0})^4\,dz. 
\end{align*} 
Since the interval $(x_{s-1},x_{s+1})$ has finite, non-degenerate length, the rigidity estimate can be applied with a fixed constant $c$.
As in the first case of non-degenerate intervals, we can find a subsequence such that $V^{n}_s \rightarrow V_s$ and $V_{s}\in SO(2)U_{m_s}$.
Using weak lower semicontinuity, the last estimate and proceeding as in \rf{Ms1} before, one obtains
\bes
\int\limits_{\Omega(x_{s-1},x_{s+1})}|\nabla u -V_s|^4 dz \leq c\liminf\limits_{n\rightarrow \infty}\int\limits_{\Omega(x_{s-1},x_{s+1})}
\dist(\nabla u_n,SO(2)U_{m_s})^4\,dz\rightarrow 0.
\ees
Hence,
\begin{align*}
\nabla u=V_s\mbox{ in } \Omega(x_{s-1},x_{s+1}).
\end{align*}
Therefore, collapsing intervals can be considered irrelevant for the gradient distribution of the limiting function $\nabla u$. 
The limit is determined by the non-degenerate intervals only.\\

\textit{Step 4: Conclusion.}
We identify $V_s=U_0$ if $m_s=0$ and $V_s=Q U_1$ if $m_s=1$. 
Indeed, a combination of the growth property (H3) and the $L^{\infty}$ bounds from (\ref{eq:Linf}) imply uniform $L^{p}$ bounds (and strong convergence) for the gradients $\nabla u_n$ as $n \rightarrow \infty$. This in turn leads to a continuous limit function, $u\in C(\bar{\Omega})$, from which one then obtains that $V_s$ has to be 
rank-one connected to $F_\lambda$. Therefore the statement follows.
\end{proof}

\begin{remark}
\begin{itemize}
\item Note, that the proof of Theorem \ref{T1} relies only on the general assumptions $(H1)-(H4)$ and the definition of the constrained set $\mA_{n,\tau}$ in \rf{Ant}. In particular, it does not make essential use of the specific form of the Hamiltonian \rf{HD}.
\item The converging subsequence and the resulting limiting function depend on the choice of the interpolation.
\end{itemize}
\end{remark}

Analogously, we obtain the following compactness statement:

\begin{proposition}[Compactness]
\label{prop:comp}
Let $F_{\lambda}\in \mathbb{R}^{2\times 2}$, $\lambda \in [0,1]$, be as above. Let $\{u_n\}_{n\in\N} \in \mA_{n,\tau}^{F_{\lambda}}$ be a sequence such that
\begin{align}
\label{eq:en}
 \limsup\limits_{n\rightarrow \infty} H_n^1(u_n)<\infty.
\end{align}
Then there exists a number $K\in\N$ and a (not relabeled) subsequence such that
\begin{itemize}
\item[(i)] $u_n\rightarrow u\ \text{in}\ W^{1,4}(\Omega,\mathbb{R}^2)$,
\item[(ii)] for each $s\in \{1,...,K-1\}$ there exist $m_s\in\{0,1\}$, $x_s\in [-1,1]$  such that
\begin{equation}
\triangledown u(z) \equiv Q^{m_{j}}U_{m_s}
\end{equation}
$\text{ for } z\in \Omega(x_s,x_{s+1}), $
where $Q^0:=\mathrm{Id}$, $Q^1:=Q$ and $x_K=1$,
\item[(iii)] 
$\bigcup\limits_{s=1 }^{K-1}[x_{s},x_{s+1}] = [-1,1].$
\end{itemize}
\end{proposition}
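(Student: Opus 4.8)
The plan is to observe that Proposition~\ref{prop:comp} is an essentially verbatim repetition of the proof of Theorem~\ref{T1}, with the only difference being that the \emph{a priori} energy bound is now assumed rather than derived from a comparison function. Concretely, in the proof of Theorem~\ref{T1}, the inequality $H_n(u_n)\le c\la$ of \rf{eq:comp} was obtained by inserting the explicit piecewise affine competitor \rf{eq:comp2}; here instead we take the hypothesis $\limsup_{n\to\infty} H_n^1(u_n)<\infty$, i.e. $H_n(u_n)\le c\la$ by the definition \rf{SE} of $H_n^1$, as the starting point. From this point onward every step of the proof of Theorem~\ref{T1} applies word for word. Thus the first thing I would do is state that the only input needed is \rf{eq:comp11}, and that \rf{eq:comp11} is now immediate from \rf{eq:en} and \rf{SE}.

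Given \rf{eq:comp11}, I would then recall Steps 2--4 of the proof of Theorem~\ref{T1} and check that none of them uses minimality beyond this energy bound. Step~2 (estimating the number of jumps between the wells) relies only on \rf{eq:comp11}: the counting estimates \rf{stratN}, \rf{eq:F}, the selection of the good layers $j^n_{-1}, j^n_0, j^n_1$ satisfying \rf{jD}--\rf{BP2}, the $L^\infty$ bound \rf{eq:Linf}, and the crucial two-well argument yielding \rf{Ms3} are all purely consequences of the upper bound on the energy together with (H1)--(H4) and the chain constraint \rf{Ant}. Step~3 (jumps within the wells plus the Friesecke--James--M\"uller rigidity theorem \cite{FJM05}) likewise produces, from \rf{Ms3} and \rf{eq:Linf}, the uniform $W^{1,p}$ bound, the points $x_s$, $x_s^n$, $y^n_{s,l}$, the dichotomy \rf{eq:wells}, and finally $\nabla u = Q^{m_s}U_{m_s}$ on each non-degenerate interval $\Omega(x_s,x_{s+1})$ (Step~3a), with degenerate intervals being irrelevant (Step~3b). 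So the bulk of the proof is simply: ``repeat Steps 2--3b of Theorem~\ref{T1} verbatim.''

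The one genuine difference is Step~4. In Theorem~\ref{T1} the identification $V_s=U_0$ or $V_s=QU_1$ was forced because the deformations were \emph{minimizers}, hence the limit $u$ had to be rank-one connected to the affine boundary datum $F_\lambda$, which (via \rf{ROC} and \rf{eq:newBC}) pins the rotations to $\mathrm{Id}$ or $Q$. For a general energy-bounded sequence we no longer have minimality, but we do still have the Dirichlet constraint $u_n = F_\lambda x$ on $\partial_x\Omega_n$ built into the class $\mathcal{A}_{n,\tau}^{F_\lambda}$ of \rf{eq:BData}, together with strong $W^{1,4}$ convergence and continuity of the limit. This suffices: the boundary layers $\Omega(x,x_2)$ near $\partial_x\Omega$ where $u=F_\lambda x$, combined with continuity of $u$ across the interfaces $x_s$, force each $V_s$ to be compatible along the $\begin{pmatrix}1\\1\end{pmatrix}$ direction with its neighbours and ultimately with $F_\lambda$, which leaves only $V_s\in\{U_0,\,QU_1\}$ by the geometry of the rank-one connections \rf{ROC}. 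I expect this boundary-compatibility bookkeeping to be the only point requiring a (minor) argument beyond citing Theorem~\ref{T1}; everything else is a direct transcription. I would therefore write the proof as: ``The hypothesis \rf{eq:en} gives \rf{eq:comp11}; from here the proof of Theorem~\ref{T1}, Steps 2--4, applies without change, the only modification being that in Step~4 one uses the Dirichlet condition in \rf{eq:BData} and the strong $W^{1,4}$-convergence in place of minimality to conclude that the attained gradients lie in $SO(2)U_0\cup SO(2)U_1$ with rotations $\mathrm{Id}$ or $Q$.''
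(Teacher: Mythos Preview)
Your proposal is correct and matches the paper's own proof, which consists of a single sentence: condition \rf{eq:en} replaces the comparison-function construction of Step~1, and the remainder of the proof of Theorem~\ref{T1} carries over verbatim. One small correction: you describe Step~4 as relying on minimality, but in fact the paper's Step~4 already uses only the growth property (H3), the $L^\infty$ bound \rf{eq:Linf}, strong $W^{1,p}$ convergence, and the Dirichlet data built into $\mathcal{A}_{n,\tau}^{F_\lambda}$ to force continuity of the limit and hence rank-one connectedness to $F_\lambda$ --- so no modification is needed there either.
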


\begin{proof}
This is analogous to the characterization of minimizers: 
Condition (\ref{eq:en}) replaces the direct construction of a comparison function and yields control of the gradient which allows to argue along the same lines as in Theorem \ref{T1}.
\end{proof}

\section{First Order $\Gamma$-Limit and the Limiting Form of the Surface Energy}

\subsection{Setup and Statement of the Result}
In the sequel, we concentrate on an important consequence of Theorem \ref{T1}: The compactness results allow us to determine the limiting form of the surface energy of the sequences satisfying \rf{eq:en} using their previously derived piecewise rigid structure.
We define boundary and internal layer energies, $B^\pm(V_1,V_2,r^*)$,
$C(V_{1},V_{2},r^*)$,  adapted to our situation of specific rank-one
connected matrices via appropriate minimization problems. Let 
\begin{align*}
\mA_n^r:= &&\Big\{u: \Omega_n^r \go\mR^2\Big|\ \mathrm{det}(u(x_2)-u(x_1),u(x_3)-u(x_1))\ge 0\ \text{for all}\nonumber\\
&&\ \{x_1,x_2,x_3\}\subset \Omega_n^r \quad\text{such that}\ \mathrm{diam}(x_1,x_2,x_3)=\sqrt{2}\ \text{and}\nonumber\\
&&\ \mathrm{det}(x_2-x_1,x_3-x_1)\ge 0\Big\}
\end{align*}
and
\begin{equation}
\label{eq:rad}
\begin{split}
\mA_{n,\tau}^r:=& \ \Big\{u\in\mA_n^r\Big|\ u^{i+1j}-u^{ij+1}=-\tau^{i+j+1}\ \text{for all}\ (i,j)\in \Omega_n^r\Big\},\\
&\quad\text{where}\quad\tau^i\in SO(2)\tau\quad\text{for all}\quad i\in [-n,\,n],
\end{split}
\end{equation}
be rescaled versions of the sets \rf{An} and \rf{eq:BData}.
\begin{defi}
\label{defi:bl}
Let $V_1=F_{\lambda}$, for some $0\le\lambda\le 1$ and $V_2$, $V_3$ be
either $U_1$ or $QU_2$ and $\tau:=\begin{pmatrix}-a \\ b \end{pmatrix}$. For functions in the class $\mA_{n,\tau}^r$
we define
\begin{equation}
\label{BL}
\begin{split}
B^+(V_1, V_2, r^*):=& \ \liminf\limits_{n\rightarrow \infty}\min\limits_{\tau_i,u^i}
\Big\{\sum\limits_{i\geq 0}\frac{1}{n}\sum\limits_{j=-n}^{n}h\left(u^{i}_n - u^{i\pm 1}_n,\tau^i_n,\,\tau_n^{i\pm 1}, j\right) : \\ 
&u  \in \mathcal{A}_{n,\tau}^r,\ 
u^{-j j}=V_{1}\begin{pmatrix} -j \\ j \end{pmatrix},  \\
&  u^{i-j j} =V_2\begin{pmatrix} i-j \\ j \end{pmatrix} +r^*, \ i\geq n,\,|j|\le n\Big\}, \\ 
 B^-(V_1, V_2, r^*):=& \ \liminf\limits_{n \rightarrow \infty }\min\limits_{\tau_i,u^i}
\Big\{\sum\limits_{i \leq 0}\frac{1}{n}\sum\limits_{j=-n}^{n}h\left(u^{i}_n - u^{i\pm 1}_n,\tau^i_n,\,\tau_n^{i\pm 1}, j\right):  \\ 
&u  \in \mathcal{A}_{n,\tau}^r,\ 
u^{-j j}=V_{1}\begin{pmatrix} -j \\ j \end{pmatrix} , \\ 
&u^{i-j j} =V_2\begin{pmatrix} i-j \\ j \end{pmatrix} +r^*, \ i\leq -n,\,|j|\le n \Big\}.
\end{split}
\end{equation}
\begin{equation}
\label{IL}
\begin{split}
\displaystyle C(V_2, V_3, r^*):=& \ \liminf\limits_{n\rightarrow \infty}\min\limits_{\tau_i,u^i}
\Big\{\sum\limits_{i\in \Z}\frac{1}{n}\sum\limits_{j=-n}^{n}h\left(u^{i}_n - u^{i\pm 1}_n,\tau^i_n,\,\tau_n^{i\pm 1}, j\right):  \\ 
&u  \in \mathcal{A}_{n,\tau}^r,\ 
u^{i-j j}=V_2\begin{pmatrix} i-j \\ j \end{pmatrix}+r_1 , \ i\leq -n,\,|j|\le n,\\ 
&u^{i-j j}=V_3\begin{pmatrix} i-j \\ j \end{pmatrix} +r_2, \ r^{*}=r_2-r_1, \ i\geq n,\,|j|\le n\Big\}.
\end{split}
\end{equation}
\end{defi}

We remark that as in \cite{BC07} our definitions of the boundary and surface energy layers correspond to minimization problems for which the boundary conditions are ``moved to infinity''. However, in contrast to \cite{BC07} we cannot eliminate the dependence on the parameter $n$ in the densities since we are dealing with two-dimensional energies.\\

The definition of the boundary and internal energy layers allow to introduce a further quantity:
\begin{defi}
Let  $V_{0}=V_K=F_{\lambda}$ for some $\lambda \in [0,1]$ and let
$V_{1},...,V_{K-1}$ belong to the set $\{U_0,\,QU_1\}$. Then we define
\begin{equation}
\displaystyle E^{K}(V_{0},...,V_K):=\inf_{r}\Big\{B^+(V_{0},V_1,r_{0})+\sum_{s=1}^{K-2}C(V_{s},V_{s+1}, r_{s})+B^-(V_{K-1},V_{K},r_{K-1})\Big\},
\label{DE}
\end{equation}
where the infimum is taken over all possible off-set vectors $r=[r_0,...,r_{K-1}]$.
\end{defi}
The main theorem of this section rigorously shows the following asymptotic decomposition of the energy
of any sequence, $\{u_n\}_{n\in \N}$, satisfying the assumptions of Proposition \ref{prop:comp}: 
\begin{equation}
\displaystyle
H_n(u_n)=\la E^{K}(F_\lambda,V_1,...,V_{K-1},F_\lambda)+o(\la )\ \text{as}\ n\rightarrow\infty,
\label{as}
\end{equation}
for some $K\in\N$ and $V_s\in\{U_0,QU_1\},\,s\in \{ 1,...,K-1\}$. This implies that for perturbations of laminar configurations the leading order energy scales as $O(\la)$. Hence, we may interpret the quantity (\ref{DE}) as a surface energy contribution. 
Correspondingly, if $u_n$ is a minimizing sequence to (\ref{DC}), (\ref{eq:newBC}) then
\be
\displaystyle H_n(u_n)=
\la \min_K\min\left\{E^K(F_\lambda,U_0,QU_1,...,F_\lambda),\,E^K(F_\lambda,QU_1,U_0,...,F_\lambda)\right\}+o(\la )
\lb{mas}
\ee
as $n\rightarrow\infty$.
Note that in (\ref{mas}) one needs to minimize an overall number, $K$, of boundary and
internal layers as the exact number is -- a priori -- not given explicitly by Theorem \ref{T1}. 
As a consequence of Theorem \ref{thm:2} we will obtain the expected result $K=3$:
Under our boundary conditions \rf{eq:newBC}, there are, in general, boundary
energy contributions as well as a single interior interface. For the more general
sequences from Proposition \ref{prop:comp}, i.e. for sequences with a finite
surface energy which need not necessarily be minimizers, a finite but
arbitrary number of interior interfaces is possible.\\

With this preparation our main result in this section can be formulated as:

\begin{theorem}[The limiting surface energy]
\label{thm:2}
Let $F_{\lambda}$, $\lambda\in [0,1]$, be as above.
Let $H_n^1(\cdot): \mA_{n,\tau}^{F_{\lambda}} \rightarrow [0,\infty]$ be defined as in \rf{SE}. Then one has 
\begin{align*}
H_{n}^1\stackrel{\Gamma}{\rightarrow} E_{surf} \mbox{ with respect to the } L^{\infty} \mbox{ topology }.
\end{align*}
Here, we have
\begin{align*}
E_{surf}(u):=\left\{ \begin{array}{ll}
E^{K}(F_{\lambda}, \nabla u(x_{1}-,0),..., \nabla u(x_{(K-1)}-,0),F_{\lambda}), \\ 
 \ \ \ \ \ \ \ \mbox{if } u\in W^{1,\infty}_0(\Omega)+F_{\lambda}x, \ \nabla
u\in\{U_0,QU_1\} \text{ in } \Omega(x_j,x_{j+1}),\\
 \ \ \ \ \ \ \  s\in \{1,...,K-1\}; u \mbox{ satisfies the boundary conditions} \\ 
 \ \ \ \ \ \ \  \mbox{prescribed by } (\ref{eq:newBC}),\\
\infty, \ \ \mbox{else},
\end{array} \right.
\end{align*}
where for a right-continuous function $f$ we use the notation $f(x{-} ):= \lim\limits_{x_{i}\downarrow x}f(x_{i})$ and
$W^{1,\infty}_0(\Omega)+F_{\lambda}x:= \left. \{u\in W^{1,\infty}(\Omega) \right|  u=F_{\lambda}x \mbox{ on } \partial_x\Omega \}$.
\end{theorem}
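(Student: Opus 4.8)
The plan is to establish the $\Gamma$-convergence statement in two parts: the $\liminf$-inequality (lower bound) and the construction of a recovery sequence (upper bound). The key structural input from Theorem~\ref{T1} and Proposition~\ref{prop:comp} is that any sequence with $\limsup_n H_n^1(u_n)<\infty$ is precompact in $W^{1,4}(\Omega)$ with a limit $u$ that is piecewise affine with gradient in $\{U_0,QU_1\}$ on the non-degenerate subintervals $\Omega(x_s,x_{s+1})$; this already forces $E_{surf}(u)<\infty$ only for such $u$, so the ``else $=\infty$'' branch of $E_{surf}$ is consistent with compactness.

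\textbf{Lower bound.} First I would take a sequence $u_n \to u$ in $L^\infty$ with $\liminf_n H_n^1(u_n)<\infty$; passing to a subsequence realizing the $\liminf$, Proposition~\ref{prop:comp} gives the piecewise-rigid limit $u$ with transition points $x_1=-1<x_2<\dots<x_K=1$ and gradients $V_s = Q^{m_s}U_{m_s}$. The strategy, following Braides--Cicalese~\cite{BC07}, is \emph{localization}: around each interior transition point $x_s$ ($s\in\{2,\dots,K-1\}$) and around the two boundary points $x=\pm 1$, one cuts out a mesoscopic window of width $\rho$ (with $\la \ll \rho \ll 1$), and observes that the energy contained in disjoint such windows sums to at most $\la^{-1}H_n(u_n)=H_n^1(u_n)$. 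On each window the rescaled (blown-up) deformation is an admissible competitor for the corresponding minimization problem defining $B^\pm$ or $C$ in Definition~\ref{defi:bl}: here the crucial technical device is the \emph{horizontal and vertical cutting procedure} (Remark~4.1) which modifies $u_n$ near the window boundary to match the affine data $V_s\begin{pmatrix} i-j\\ j\end{pmatrix}+r^*$ exactly, while (i) changing the energy only by $o(1)$ as $\rho\to 0$ and $n\to\infty$, and (ii) preserving the non-interpenetration constraint $u\in\mA_{n,\tau}^r$. The shift vectors $r_s$ of the cut-out pieces are determined by the values of $u$ at the interfaces and their differences coincide with the $r^*$ appearing in $C$; optimizing over all admissible shifts gives exactly the infimum in $E^K$. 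Summing the window contributions and letting $\rho\to 0$ after $n\to\infty$ yields $\liminf_n H_n^1(u_n) \ge E^K(F_\lambda,V_1,\dots,V_{K-1},F_\lambda) = E_{surf}(u)$. The degenerate intervals ($x_s=x_{s+1}$) contribute no gradient information (Step 3b of Theorem~\ref{T1}) but could a priori carry extra energy; one absorbs them by noting that collapsing two interfaces into one only \emph{increases} the relevant layer energy by subadditivity of the $C$-functional, so they can be discarded in the lower bound.

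\textbf{Upper bound (recovery sequence).} Given $u\in W^{1,\infty}_0(\Omega)+F_\lambda x$ with $\nabla u\in\{U_0,QU_1\}$ piecewise, $E_{surf}(u)=E^K(F_\lambda,\nabla u(x_1-),\dots,F_\lambda)$; I would build $u_n$ by gluing together: (a) near each interface $x_s$ and near $\pm 1$, a rescaled copy of a near-optimal competitor for the corresponding $B^\pm$ or $C$ minimization problem (which exists by definition of the $\liminf$ as $n\to\infty$), truncated to a window of width $\rho_n\to 0$ chosen to diverge in lattice units; (b) the exact affine deformations $V_s x + r_s$ in between, with the optimal shifts $r_s$ realizing the infimum in $E^K$. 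The same cutting lemma guarantees the pieces can be matched continuously, without interpenetration, at a cost $o(1/n)\cdot$(perimeter), i.e. $o(\la)$ in the unscaled energy. Then $H_n^1(u_n)\to E^K = E_{surf}(u)$ and $u_n\to u$ in $L^\infty$ (indeed in $W^{1,p}$), establishing the $\limsup$-inequality. For $u$ not of this form, $E_{surf}(u)=\infty$ and the compactness from Proposition~\ref{prop:comp} shows no sequence with bounded rescaled energy can converge to it, so the inequality is trivial.

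\textbf{Main obstacle.} I expect the hard part to be the cutting lemma underlying both directions: modifying $u_n$ on a boundary layer of a mesoscopic window so as to simultaneously (i) interpolate between the actual discrete deformation and the prescribed affine datum, (ii) keep the modified map in $\mA_{n,\tau}^r$ — i.e. respect the vertical chain constraint $u^{i+1\,j}-u^{i\,j+1}=-\la\tau^{i+j+1}$ with $\tau^i\in SO(2)\tau$, which is a genuinely two-dimensional rigidity that has no analogue in the 1D setting of~\cite{BC07} — and (iii) control the incurred energy error uniformly. Because the wells are non-discrete, one cannot simply freeze the deformation to a constant value of the gradient; instead one must interpolate the rotation angle and the chain-direction field $\tau^i$ continuously across the cut, and verify via the growth hypothesis (H3) and the $L^\infty$-bound \rf{eq:Linf} that the $W^{1,p}$-compactness makes the error negligible. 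The orientation (determinant) sign condition in \rf{An} must be checked along the interpolation, which is where the explicit geometry of the rank-one connections \rf{ROC} and the sign structure of $\tau=\begin{pmatrix}-a\\ b\end{pmatrix}$ enter decisively.
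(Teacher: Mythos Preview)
Your overall strategy --- localization around interfaces \`a la Braides--Cicalese, with a cutting/gluing procedure to produce admissible competitors for the layer problems $B^\pm$, $C$ --- matches the paper's approach, and your treatment of degenerate intervals via a triangle inequality for the layer energies is exactly what the paper does. However, your description of the central technical step contains a misconception that would make the argument fail as written, and you omit a second ingredient that is essential in this $(1+\epsilon)$-dimensional setting.

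You propose to ``interpolate the rotation angle and the chain-direction field $\tau^i$ continuously across the cut.'' The paper does \emph{not} do this, and such an interpolation would be hard to control: any smooth transition of $\nabla u$ between the wells costs energy proportional to its length. The actual technique is a \emph{hard cut} at a carefully chosen lattice position $r_n^{\alpha,s}$, located within $n^\alpha$ of the interface, where $\nabla u_n$ is already $O(n^{-\alpha/4})$-close to $V_s$ (such points exist by the ``simultaneously good points'' analysis from the proof of Theorem~\ref{T1}); one then extends affinely by $V_s x+\text{const}$ beyond the cut. The chain constraint is preserved because a telescope sum gives $|\tau-\tau_n^{r_n^{\alpha,1}}|\le\sum_{i=-n}^{r_n^{\alpha,1}}|\delta_n^{i+1}|\lesssim n^{\alpha-1}$, which forces the restriction $\alpha<\tfrac12$; without this scale constraint (which you do not mention) the non-interpenetration condition at the cut cannot be verified.

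The second ingredient you underplay is the \emph{averaging lemma} (Lemma~\ref{lem:av}). The layer energies $B^\pm,C$ involve a minimization over strips of height $2m$ with $m\to\infty$; after a horizontal cut of width $\sim n^\alpha$ you still have $2n$ vertical layers, which is the wrong scale to compare against the layer problem. Lemma~\ref{lem:av} shows, by pigeonhole over disjoint vertical strips of height $2m$, that some such strip carries energy at most the full vertical average plus $\epsilon$; taking $m=n^\alpha$ then makes the cut-out piece an admissible competitor at scale $m$. This vertical reduction is needed in both directions. In the $\limsup$ there is a further subtlety you omit: the $\liminf$ in Definition~\ref{defi:bl} is realized only along a subsequence $\{n_{m_s}\}$, so to build a recovery sequence for \emph{every} $n$ one needs a second averaging step (Lemma~\ref{lem:en}) to fill the gaps between the $n_{m_s}$.
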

The proof of the $\Gamma$-convergence for the surface energy follows along the lines of the strategy 
introduced by Braides and Cicalese in \cite{BC07}. However, adaptations are needed for our special two-dimensional chain setting: The ``$(1+\epsilon)$''- dimensionality of it causes additional technical difficulties, both in the construction of the recovery sequence for the $\Gamma$-$\limsup$ inequality and in the proof of the $\Gamma$-$\liminf$ inequality. Thus, we start by proving an important auxiliary result in the following subsection. It will play a major role in adapting the strategy of Braides and Cicalese \cite{BC07} to our setting.\\

\subsection{An Auxiliary Result}

In the sequel we introduce one of the crucial techniques used in proving the $\Gamma$-$\liminf$ and 
$\Gamma$-$\limsup$ inequalities. This first technical tool consists of an observation based on averaging and allows to pass from 
a large number of horizontal atomic layers to a smaller number of these without changing the energy much. \\
In order to give the precise statement, we introduce the following quantity. It can be regarded as an intermediate auxiliary functional between $H_n(u_n)$ and $C(V_1,V_2,r_1)$:

\begin{defi}
Let $m, n\in \N$, $n\geq m$. Then for $u_{n}\in \mathcal{A}_{n,\tau}^r$ we set 
\begin{align*}
H^1_{n,m}(u_{n}):=\sum\limits_{i=-n}^{n}\frac{1}{m}\sum\limits_{j=-m}^{m}h\left(u^{i}_{n} - u^{i\pm 1}_{n},\tau^i_{n},\,\tau_{n}^{i\pm 1}, j\right).
\end{align*}
\end{defi}

With the aid of this ``intermediate'' energy we can prove the following central  ``averaging lemma''.

\begin{lem}[Averaging]
\label{lem:av}
Let $m\in \N$ and $\epsilon >0$ be arbitrary but fixed. Let
$u^{ij}_n: \Omega_n^r \rightarrow \mathbb{R}^2$ with
$H_{n,n}(u^{ij}_n)\leq C<\infty$ for all $n\in\N$. Then for any $n\in \N$ with
$n>m\left( 1+ \frac{C}{\epsilon} \right)$ there exists $$u_{\epsilon}: \Omega_{n,m}:= \left\{z\big| \ z= s\begin{pmatrix} 1 \\ 0 \end{pmatrix} + t \frac{1}{\sqrt{2}} \begin{pmatrix} -1 \\1  \end{pmatrix}, \ s\in [-n,n], \ t\in [-m,m] \right\} \rightarrow \mathbb{R}^2$$ 
such that the following statements hold:
\begin{itemize}
\item[(i)] there exists a translation $j_0\in \N$ such that $u_{\epsilon}^{i-j_0 j-j_0}= u^{ij}_n \mbox{ in } \Omega_{n,m},$
\item[(ii)] $H_{n,m}^1(u^{ij}_{\epsilon}) \leq H_{n,n}^1(u^{ij}_n) + \epsilon.$
\end{itemize}
\end{lem}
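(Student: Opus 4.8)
The plan is to exploit a pigeonhole/averaging argument on the vertical variable $j$. The hypothesis $H_{n,n}(u^{ij}_n)\le C$ means that
$$
\sum_{i=-n}^{n}\frac{1}{n}\sum_{j=-n}^{n}h\left(u^{i}_n-u^{i\pm1}_n,\tau^i_n,\tau^{i\pm1}_n,j\right)\le C.
$$
I would rewrite the left-hand side by grouping the $j$-range $[-n,n]$ into roughly $n/m$ disjoint consecutive blocks $B_k$ of length $2m$ (one centered at each admissible translation). Writing $S_k := \sum_{i=-n}^{n}\sum_{j\in B_k}h(\cdots)$, one has $\sum_k S_k \le nC$, so the \emph{average} of $S_k$ over the $\sim n/m$ blocks is $\le mC$. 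By the pigeonhole principle there is at least one block $B_{k_0}$ with $S_{k_0}\le mC\cdot\frac{m}{n-m}$ (being careful with the exact count of blocks: there are at least $\lfloor n/m\rfloor$ of them, and the quantitative bound $n>m(1+C/\epsilon)$ is precisely what makes $\frac{m}{n-m}C\le\epsilon$, hence $\frac{1}{m}S_{k_0}\le\epsilon$). Let $j_0$ be the center of $B_{k_0}$ and define $u_\epsilon$ on $\Omega_{n,m}$ by the shift $u_\epsilon^{i-j_0\,j-j_0} := u^{ij}_n$; this gives (i) directly, and the class constraint $u\in\mathcal{A}^r_{n,\tau}$ is preserved since the vertical-shift relation $u^{i+1j}-u^{ij+1}=-\tau^{i+j+1}$ and the orientation/non-interpenetration determinant conditions are all translation-invariant in $(i,j)$. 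For (ii), note that $H^1_{n,m}(u_\epsilon)=\frac{1}{m}S_{k_0}$ up to the contributions at the two horizontal boundary layers $j=\pm m$ of the block, and each local term there is controlled by the same block sum, so $H^1_{n,m}(u_\epsilon)\le \frac{1}{m}S_{k_0}\le\epsilon\le H^1_{n,n}(u^{ij}_n)+\epsilon$ since $H^1_{n,n}\ge0$.

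The step I expect to be the main obstacle is the bookkeeping at the horizontal boundary $j=\pm m$ of the chosen block. Because the Hamiltonian density $h(\cdots,j)$ depends on vertical neighbors $u^{ij\pm1}$ (equivalently on $\tau^i_n$ and the discrepancies $\delta^{i\pm1}_n$ through the explicit formula \rf{DC}), the restriction of $u^{ij}_n$ to a block of width $2m$ needs a consistent meaning of ``$j=\pm m$ layer'' — one has to check that the truncated configuration still lies in $\mathcal{A}^r_{n,\tau}$ and that the energy $H^1_{n,m}$ evaluated on it only sees atoms inside (or on the boundary of) the block. Here I would either (a) choose the block slightly larger than $2m$ so that the energy on the inner $2m$-slab is genuinely a sub-sum of $S_{k_0}$, or (b) observe that by \rf{DC} the $j$-th layer's local energy only couples to layers $j\pm1$, so including the layers $j=\pm(m{+}1)$ in the block suffices; a minor re-indexing of the pigeonhole count then absorbs this without affecting the threshold $n>m(1+C/\epsilon)$ up to harmless constants, which is acceptable since $m$ and $\epsilon$ are fixed.

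If one wants to keep the statement exactly as written (with the threshold $n>m(1+C/\epsilon)$ and not merely $\gtrsim$), I would do the counting carefully: partition $\{-n,\dots,n\}$ into $\lfloor (2n+1)/(2m+1)\rfloor\ge \lfloor n/m\rfloor$ blocks, discard the (at most one) leftover partial block whose energy is nonnegative anyway, and apply pigeonhole to the remaining full blocks; the number of full blocks is $\ge \lfloor n/m\rfloor > C/\epsilon$ by hypothesis, so some full block has total energy $\le \frac{nC}{C/\epsilon}=n\epsilon$, i.e. $\frac1m$ times that block's energy is $\le \frac{n}{m}\epsilon$ — which is too weak, so in fact one should average the \emph{normalized} quantity $\frac{1}{2m}\sum_{i}\sum_{j\in B_k}h$ over blocks, whose average is $\le \frac{nC}{(2m)(\text{number of blocks})}\le \frac{C}{2}\cdot\frac{m}{n-m}\cdot\frac{n-m}{m}\cdot\frac{1}{\#\text{blocks}}$; the clean way is to say $\sum_k\big(\frac{1}{2m}\sum_{i,j\in B_k}h\big)\le \frac{n}{2m}C\cdot\frac{2m}{?}$ — rather than belabor this, the honest route is: the number of blocks is $N:=\lfloor n/m\rfloor$, their total (un-normalized) energy is $\le nC$, so the minimal block has energy $\le nC/N\le nC/(n/m - 1)=\frac{mnC}{n-m}$, hence its $\frac1m$-normalized energy is $\le \frac{nC}{n-m}\le (1+\epsilon/C)^{-1}\cdot\frac{C(1+C/\epsilon)}{\,C/\epsilon\,}$ — and one checks $\frac{nC}{n-m}\le \epsilon$ fails, so in truth the intended normalization is $\frac{1}{m}\sum_{j\in B_k}$ with $|B_k|=2m$ and the bound uses that the \emph{average over $j$} already carries a $\frac1n$, giving minimal block normalized energy $\le \frac{m}{n}\cdot\frac{nC}{N\cdot m}\cdot n \le \ldots$; I will sort out the exact constant when writing the proof, but the mechanism — pigeonhole over $\Theta(n/m)$ vertical blocks against total energy $\le nC$, with the threshold on $n$ tuned to force the winning block below $\epsilon$ — is the whole content, and translation-invariance of the admissible class does the rest.
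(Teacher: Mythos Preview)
Your overall mechanism is exactly the paper's: partition the vertical range $[-n,n]$ into $\lfloor n/m\rfloor$ disjoint slabs of height $2m$, compare the slab energies to the total, and take the translate corresponding to the best slab. The paper phrases this as a contradiction argument rather than a direct pigeonhole, but the content is the same.

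Your arithmetic is tangled, however, because you are chasing the wrong target. You try to force $\frac{1}{m}S_{k_0}\le\epsilon$, which is simply false in general (this would say some slab has essentially zero energy). The correct target is the \emph{weaker} inequality in (ii), namely $\frac{1}{m}S_{k_0}\le H^1_{n,n}(u_n)+\epsilon$, and for that the bookkeeping is clean: writing $T:=nH^1_{n,n}(u_n)$ and $N:=\lfloor n/m\rfloor\ge n/m-1$, disjointness gives $\sum_k S_k\le T$, so
\[
\frac{1}{m}\min_k S_k \;\le\; \frac{T}{mN}\;\le\;\frac{T}{n-m}\;=\;H^1_{n,n}(u_n)\cdot\frac{n}{n-m}\;=\;H^1_{n,n}(u_n)+\frac{m}{n-m}H^1_{n,n}(u_n)\;\le\;H^1_{n,n}(u_n)+\frac{mC}{n-m},
\]
and $n>m(1+C/\epsilon)$ is exactly $\frac{mC}{n-m}<\epsilon$. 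The paper obtains the identical chain of inequalities by assuming $\frac{1}{m}S_k>H^1_{n,n}+\epsilon$ for all $k$ and summing to reach $mH^1_{n,n}>(n-m)\epsilon$, a contradiction.

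Your second paragraph about boundary layers at $j=\pm m$ is a non-issue here. In the reduced chain Hamiltonian the density $h(u^i_n-u^{i\pm1}_n,\tau^i_n,\tau^{i\pm1}_n,j)$ depends on the base-layer atoms $u^i_n$, the extension vectors $\tau^i_n$, and the integer $j$ as an explicit parameter; it does not couple to separate values $u^{i,j\pm1}$. Hence restricting the $j$-sum to a slab and translating introduces no boundary terms, and the admissibility class $\mathcal{A}^r_{n,\tau}$ is preserved under the diagonal translation for the reason you state.
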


Here the decisive estimate is given by the last point which allows to pass
from averaging over $n$ points in the vertical direction to averaging over
only $m$ points while creating at most an energy surplus of the size $\eps$.

\begin{proof} 
Fixing $n>m\left( 1+ \frac{C}{\epsilon} \right)$, we construct $u_{\epsilon}$ from $u_{n}$. For this, we subdivide the
parallelogram $\Omega_n^r$ into
$\floor*{\frac{n}{m}}$ disjoint parallelograms (possibly leaving a
  remainder of height less than $m$) of the form $[-2km, 2n-2k m] \times I_k$, $k\in \{0,...,\floor*{\frac{n}{m}}\}$, where $I_k$ is the interval $I_k=[-n+2km,-n+2(k+1)m]$ of the length $|I_k|=2m$. We claim that 
\begin{equation}
\begin{split}
\label{eq:av}
H^{1}_{n,m}\left(u_{n}^{ij}\left(\cdot+\tau^k\right)\right) &\leq H^{1}_{n,n}(u_{n}^{ij}) + \epsilon \mbox{  for some } k\in \left\{0,..., \floor*{\frac{n}{m}}\right\},
\end{split}
\end{equation}
where $\tau^k:=(n-(k+1)m)\begin{pmatrix} -1 \\ 1 \end{pmatrix}$ and $u_{n}^{ij}\left(\cdot+\tau^k\right)$
denotes a vertically translated version of $u^{ij}_{n}(\cdot)$ by $(-n+(k+1)m)$ lattice layers.
In order to observe (\ref{eq:av}), we argue via contradiction. Assuming that the statement of the lemma were wrong, we would obtain
\begin{align*}
H^{1}_{n,m}\left( u_{n}^{ij}\left(\cdot+\tau^k\right)\right)
>H^{1}_{n,n}(u_{n}^{ij}) + \epsilon  \mbox{ for all } k\in \left\{0,..., \floor*{\frac{n}{m}}\right\}.
\end{align*}
Since the strips $I_{k}$, $k\in \{0,..., \floor*{\frac{n}{m}}\}$, are disjoint, this leads to the following estimate:
\begin{equation}
\label{eq:esten}
\begin{split}
nH^{1}_{n,n}(u_{n}^{ij})&\geq
\sum\limits_{k=0}^{\floor{\frac{n}{m}}}m H^{1}_{n,m}
\left(u_{n}^{ij}\left(\cdot+\tau^k\right)\right)
>\left(\frac{n}{m}-1\right)m \left( H^{1}_{n,n}(u_{n}^{ij}) + \epsilon \right)\\
&\geq nH^{1}_{n, n}(u_{n}^{ij}) +(n-m)\epsilon - m H^{1}_{n,n}(u_{n}^{ij}).
\end{split}
\end{equation}
As, however, by assumption $ H^{1}_{n,n}(u_{n}^{ij}) \leq C$, this yields a contradiction:
Since $n>m\left( 1+ \frac{C}{\epsilon} \right)$
and since  $H^1_{n,n}(u^{ij}_n)\leq C$, we infer $m H^{1}_{n, n}(u_{n}^{ij})  \leq  (n- m)
\epsilon$. Hence, (\ref{eq:esten}) cannot be true, which proves (\ref{eq:av}). Defining
$$u_{\epsilon}^{ij}:=u_{n}^{ij}\left(\cdot+\tau^k\right)$$ 
with the corresponding $k\in \left\{0,..., \floor*{\frac{n}{m}}\right\}$, implies the claims of the lemma.
\end{proof}

\subsection{Proof of Theorem \ref{thm:2}}

With the preparation from the previous section, we address the proof of the
$\Gamma$-$\liminf$ inequality: Apart from the ``averaging procedure''
introduced in the previous section a second crucial ingredient of its proof consists 
of a \emph{horizontal ``cutting procedure''}, which allows to modify a given 
configuration with locally small energy by horizontally extending the configuration via 
an appropriate element of $SO(2)U_0\cap SO(2)U_1$ after a certain point. In this sense, the ``cutting
procedure'' in \emph{the horizontal direction} complements the averaging procedure
from the previous section since the latter can be interpreted 
as a ``cutting'' mechanism in \emph{the vertical direction}. Both tools -- the ``cutting'' and ``averaging'' procedures
 -- also play a central role in the construction of the recovery sequence for the $\Gamma$-$\limsup$ inequality later on. 

\begin{proof}[Proof of the $\Gamma$-$\liminf$ inequality]
Let $\{u_{n}\}_{n\in \N}\subset L^{\infty}$ be a sequence such that $u_{n}\rightarrow u$ with respect to the $L^{\infty}$ topology. Without loss of generality, we may assume $H_n^1(u_{n})<\infty.$
According to the compactness result of Proposition \ref{prop:comp}, $u_{n} \rightarrow u$ in $W^{1,4}(\Omega)$ 
and there exists a sequence of points $x_{n}^{1}<...<x_{n}^{K}, \ \{x_{n}^{s}\}_{n\in \N} \subset[-1,1]$ for all $s\in \{1,...,K\}$, as well as limiting points $-1=x^{1}\leq ...\leq x^{K}=1, \ x^{s}\in  (-1,1)$ for all $s\in \{1,...,K\}$, such that (possibly passing to subsequences)
\begin{align}
&h\left( \frac{u^{i}_n - u^{i\pm 1}_n}{\la},\tau^i_n,\,\tau_n^{i\pm 1}, j_l^n\right) \leq \tilde{c} \mbox{ for } i\neq \floor{n x^{s}_{n}}
\mbox{ and } l\in\{-1,0,1\}, \nonumber\\
&x_{n}^{s} \rightarrow x^{s} \mbox{ as } n \rightarrow \infty, \ s\in\{1,...,K\}, \nonumber\\ 
&\nabla u_n \go V_s \in \{U_{0},Q U_{1}\} \mbox{ in } L^4(\Omega(x_{s},x_{s+1})),
\lb{CS}
\end{align} 
where the vertical indexes $j_0^n,j_{\pm 1}^n$ are defined as in \rf{jD}--\rf{BP2}. 
In particular, the limiting points $x^s$, $s\in \{1,...,K\}$, are the only possible jump points of the gradient of $u$.
Since we may always pass to the infimizing sequence of the energy, we may further assume w.l.o.g. that the
statements of \rf{CS} hold for the whole sequence.
We remark that the $x^{s}_{n}$ are possibly degenerate in the sense that $x^{s}=x^{s+1}$ for some $s\in \{1,...,K\}$. However, in the sequel we only consider the case of non-degenerate points $x^s$, and briefly comment on the necessary modifications in the case of degenerate points at the end of the proof.\\
In order to pass from the coordinates in $\Omega$ to the integer coordinates in $\Omega_n^r$, we keep track of the number of atoms between the respective jumps of the gradient in the $n$-th iteration step by defining sequences $\{h_{n}^{s}\}_{n\in \N}$, $h_{n}^{s} \in\N$, and $\{k_{n}^{s}\}_{n\in\N}$, $k_{n}^{s} \in\N$, by setting
\begin{align}
\label{eq:count}
\displaystyle \lim_{n\go\infty}\la h_{n}^{s} = \frac{x_{s+1}-x_{s}}{2} \ \mbox{ and } \
 k_{n}^{s} =-n+2\sum\limits_{i=1}^{s-1}h_{n}^{i},
\end{align}
for $s\in\{1,...,K-1\}$. 
This allows to rewrite the energy as
\begin{align*}
 H_{n}^1(u_{n}) = \la\sum\limits_{i,j=-n}^{n}h \left( \frac{u^{i}_n - u^{i\pm 1}_n}{\la},\tau^i_n,\,\tau_n^{i\pm 1}, j\right)
 = H_{n}^{1} + \sum\limits_{s=2}^{K-1}H_{n}^{s} + H_{n}^{K},
\end{align*}
with
\beas
 H_{n}^{1} &:=&  \la \sum\limits_{i=-n}^{h_{n}^{1}}\sum\limits_{j=-n}^{n} h\left(\frac{u^{i}_n - u^{i\pm 1}_n}{\la},\tau^i_n,\,\tau_n^{i\pm 1}, j\right), \\
H_{n}^{s} &:=&  \la \sum\limits_{i=k_{n}^{s-1}-h_{n}^{s-1}}^{k_{n}^{s-1}+h^{s}_{n}}\sum\limits_{j=-n}^{n}
h\left( \frac{u^{i}_n - u^{i\pm 1}_n}{\la},\tau^i_n,\,\tau_n^{i\pm 1},j\right),\quad s=2,...,K-1,\\
 H_{n}^{K} &:=&  \la \sum\limits_{i=k_{n}^{K-1}-h_{n}^{K-1}}^{k_{n}^{K-1}}\sum\limits_{j=-n}^{n}h\left( \frac{u^{i}_n - u^{i\pm 1}_n}{\la},\tau^i_n,\,\tau_n^{i\pm 1}, j\right).
\eeas
At this point, we would like to relate our energy contributions $H_{n}^{1},...,H_{n}^{K}$ 
to the interfacial and boundary layer energies from Definition \ref{defi:bl}. As (rescaled versions of) our functions $u_{n}$ are very close to being admissible for the definition of these energies, we only have to modify them slightly: A first ansatz would be to use $u_n$ for the region close to the expected jump and then extend this deformation by the correct element $V_s$ of one of the energy wells $SO(2)U_0$ or  $SO(2)U_1$ (c.f. Braides and Cicalese, \cite{BC07}).
Since, however, our gradient sequence only converges in $L^p$ and \textit{not uniformly}, this extension might cause an error of $O(1)$. \\
In order to avoid this difficulty, we use the (uniform in $n$) bound $H_{n}^1(u_n)\leq c<\infty$ which allows to argue similarly as in Step 2 of Theorem \ref{T1}. The result of the next lemma implies the $\Gamma$-$\liminf$ inequality.

\begin{lem}
\label{lem:cut}
For any $s\in \{1,...,K\}$ it holds 
\begin{align}
H^{s}_n \geq \left\{ \begin{array}{ll} B^{+}(F_{\lambda},V_1,r_0) -w(n) &\mbox{ if } s=1, \\
C(V_s, V_{s+1}, r_s) -w(n) &\mbox{ if } s\in \{2,...,K-1\},\\
B^{-}(V_{K-1},V_K,r_{K-1}) -w(n) & \mbox{ if } s=K,
 \end{array} \right.
\lb{EEs}
\end{align} 
where $w(n) \rightarrow 0$ as $n \rightarrow \infty$.
\end{lem}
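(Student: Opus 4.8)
The plan is to prove the estimate \rf{EEs} uniformly in the three cases (boundary layers $s=1,K$ and interior layers $s\in\{2,\dots,K-1\}$), since the argument is structurally identical; I will describe the interior case and indicate the boundary modifications. First I would fix $s$ and localize: by the definition of $H_n^s$ together with \rf{eq:count}, $H_n^s$ is the rescaled energy of $u_n$ restricted to a vertical stripe of horizontal width $\sim (x_{s+1}-x_s)+ (x_s-x_{s-1})$ around the expected jump point $x^s$, extended over all $2n+1$ vertical layers. The target quantity $C(V_s,V_{s+1},r_s)$ is, by Definition \ref{defi:bl}, an infimum over admissible competitors in $\mA_{n,\tau}^r$ whose boundary data are \emph{exactly} $V_s$ on the left and $V_{s+1}$ (up to an offset $r_s$) on the right, with the data ``sent to infinity''. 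The whole point is to modify $u_n$ on the stripe into such an admissible competitor at the cost of an error $w(n)\to 0$.

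The key steps, in order, are as follows. (1) \emph{Averaging in the vertical direction.} Apply Lemma \ref{lem:av} (with a fixed but small $\epsilon$, later sent to $0$ along a diagonal subsequence) to replace the average over $2n$ vertical layers by an average over a fixed number $2m$ of layers, paying only $\epsilon$; this is what brings the energy of the stripe into the form appearing in the definition of $C$ (where the vertical average is over a controlled number of layers) and removes the obstruction that we only have $L^p$, not $L^\infty$, convergence of the gradient in the vertical direction. (2) \emph{Horizontal cutting.} Near the left and right ends of the stripe, Step 2 of Theorem \ref{T1} (the ``simultaneously good points'' argument together with \rf{H3r} and the rigidity reasoning following \rf{ContAs}) guarantees the existence of vertical columns $i_-^n$, $i_+^n$ close to the stripe boundary along which $\nabla u_n$ is $n^{-\alpha/4}$-close to a single fixed rotation of $U_0$ respectively $U_1$, for \emph{all} vertical layers $j$. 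At such a column one can ``cut'' the deformation and replace $u_n$ to the left of $i_-^n$ (resp. right of $i_+^n$) by the exact affine map $V_s\, x + \text{const}$ (resp. $V_{s+1}\,x + r_s$); because the gradient agrees with the well up to $n^{-\alpha/4}$ along the whole cutting column, the continuity and non-interpenetration constraints in the definition of $\mA_{n,\tau}^r$ can be preserved (this uses that the chain constraint $u^{i+1\,j}-u^{i\,j+1}=-\tau^{i+j+1}$ is compatible with each well, and a small adjustment of the $\tau^i$'s near the cut), and the energy added by the cut is $O(n^{-\alpha})$ per layer, hence $\to 0$ after rescaling. (3) \emph{Comparison.} The modified deformation $\tilde u_n$ is now admissible for the minimization problem defining $C(V_s,V_{s+1},r_s)$ with the offset $r_s:=r_s^n\to r_s$ read off from the cut; therefore
\begin{align*}
H_n^s \geq H_n^s(\tilde u_n) - O(n^{-\alpha}) - \epsilon \geq C(V_s,V_{s+1},r_s^n) - O(n^{-\alpha}) - \epsilon,
\end{align*}
and, using the continuity of $C$ in the offset variable together with $\liminf$ stability and letting $\epsilon\to 0$ along a diagonal subsequence, one concludes $H_n^s\geq C(V_s,V_{s+1},r_s)-w(n)$ with $w(n)\to 0$. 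For $s=1$ (resp. $s=K$) only the right (resp. left) cut is performed and the left (resp. right) data are already the prescribed linear boundary values $F_\lambda x$ on $\partial_x\Omega_n$, so the same argument yields the $B^\pm$ bounds.

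The main obstacle I anticipate is Step (2): ensuring that the horizontal cut produces a genuinely admissible element of $\mA_{n,\tau}^r$ — i.e. that after replacing the deformation by an exact affine well map past the cutting column, the orientation/non-interpenetration determinant conditions in \rf{An} and the rigid chain constraint \rf{Ant} both remain satisfied across the seam. This is delicate precisely because the matching is only approximate ($n^{-\alpha/4}$ error) and because the vertical shift vectors $\tau^i$ must be interpolated from the values dictated by $u_n$ to the constant value compatible with the chosen well; one has to check that this interpolation can be done over $O(1)$ columns without violating $\tau^i\in SO(2)\tau$ and without creating interpenetration, and that the resulting energy cost is still $o(\lambda^{-1}\cdot\lambda)=o(1)$ after rescaling. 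The vertical averaging in Step (1) helps here because it reduces the number of layers on which compatibility must be checked to a fixed $m$. A secondary technical point is the bookkeeping of the offsets $r_s^n$ and the proof that $C$ depends continuously on them, which is needed to pass from $C(V_s,V_{s+1},r_s^n)$ to $C(V_s,V_{s+1},r_s)$ in the limit; this should follow from a routine translation argument in the competitor class.
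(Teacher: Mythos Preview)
Your overall strategy---vertical averaging via Lemma~\ref{lem:av} followed by a horizontal cut producing a competitor for the layer problems---matches the paper's, and you correctly flag admissibility at the seam as the crux. But your proposed fix (interpolating $\tau^i$ over $O(1)$ columns, relying on a fixed vertical level $m$) does not close the gap. At the cutting column the gradient is only known to be $n^{-\alpha/4}$-close to \emph{some} $Q_nV_s$ with $Q_n\in SO(2)$, hence $\tau_n^i\approx Q_n\tau$; gluing the exact map $V_sx$ (whose chain vector is $\tau$) introduces a seam term of size $|j|\,|\tau-\tau_n^i|$ in the horizontal difference at each layer $j$, and interpolating $\tau^i$ across a change of order $|Q_n-\mathrm{Id}|$ costs $O(1)$ energy per column regardless of how few vertical layers you keep. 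Unless you can force $Q_n\to\mathrm{Id}$, the cut error is $O(1)$, not $o(1)$.

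The paper supplies the missing mechanism. For the boundary layers $s=1,K$ the cut is placed within $n^\alpha$ columns of $\partial_x\Omega_n$, where the Dirichlet data impose $\tau^i=\tau$; a telescope sum using the uniform bound $|n\delta_n^{i\pm1}|\le c$ (from \rf{eq:Linf}) gives $|\tau-\tau_n^{i_\pm^n}|\lesssim n^{\alpha-1}$, which then forces $|Q_n-\mathrm{Id}|\lesssim n^{-\alpha/4}$. For interior $s$ the paper does not try to show $Q_n\approx\mathrm{Id}$; instead it first replaces $u_n$ by $(Q_n^s)^{-1}u_n$ in the $n^\alpha$-neighborhood of the interface (the same telescope argument shows a \emph{single} rotation $Q_n^s$ works on both sides of the jump), which leaves the energy invariant and reduces to $Q_n=\mathrm{Id}$. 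The averaging is then carried out at level $m(n)=n^\alpha$ with $\epsilon(n)=n^{-\delta}$, so that the seam error is $m\cdot n^{\alpha-1}=n^{2\alpha-1}\to 0$ for $\alpha<\tfrac12$ and the competitor lives at a vertical level tending to infinity, as the $\liminf$ in Definition~\ref{defi:bl} requires. Finally, continuity of $C$ in the offset is not used and would be awkward to prove: the lemma is established for the $n$-dependent offset $r_s^{m(n)}$ read off from the cut, and the $\Gamma$-$\liminf$ conclusion only needs the infimum over $r$.
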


We present the strategy of the proof in greatest detail for the left boundary layer.
The arguments for the interior interfaces and the right boundary layer follow along the same lines.
We indicate the main differences and how to overcome additional difficulties.

\begin{proof}
\emph{Step 1: The left boundary layer.}
By the estimates on the simultaneously good points from the proof of Theorem
\ref{T1}, we infer that for any $0<\alpha<1$ there exists an integer $r^{\alpha,1}_{n}\in ( -n, -n+ n^{\alpha}]$
such that in $\Omega(\la r^{\alpha,1}_{n},\la r^{\alpha,1}_n+\la)$ the function $u_n$ satisfies estimates of the form \rf{Ms3}, i.e. there exists a rotation $Q_n$ such that
\begin{equation}
\label{eq:alpha}
\begin{split}
&||\nabla u_n - Q_nV_1||_{C(\Omega(\la r^{\alpha,1}_{n},\la r^{\alpha,1}_n+\la))} \lesssim n^{- \frac{\alpha}{4}},
\\
& |n\delta^{r^{\alpha,1}_{n}\pm 1}_n|\lesssim n^{- \frac{\alpha}{4}}.
\end{split}
\end{equation}
In particular, the first equation yields an estimate on the vertical extension vectors:
\begin{align*}
|Q_n\tau- \tau^{r_n^{\alpha,1}}_n|\leq n^{- \frac{\alpha}{4}}.
\end{align*}
We would like to exploit this, in order to obtain a test function for the minimization problem (\ref{BL}). For this we use a strategy based on defining a test function as (a rescaled version of) $u_n^{ij}$ for the first $r^{\alpha,1}_n$ horizontal steps and then extending it by an appropriately translated version of the affine function $V_1 x$. However, we have to be careful not to violate the non-interpenetration condition (\ref{An}). In the sequel, we provide the details of the construction. \\
We start by remarking that equation \rf{eq:Linf} yields
\begin{align}
|n\delta^{i\pm 1}_n|\leq c \mbox{ for all } i\in [0, r^{\alpha,1}_n].
\lb{des}
\end{align}
Making use of the left boundary data, i.e. $u_n^{i,0}=F_{\lambda}\begin{pmatrix} i \\ 0 \end{pmatrix}$ for $i\leq -n$, it is then possible to deduce good closeness properties between $\tau$ and $\tau^{r^{\alpha,1}_n}_n$ via a telescope sum:
\begin{align}
\label{eq:taudiff}
|\tau- \tau^{r^{\alpha,1}_n}_n| \leq  \sum\limits_{i=-n}^{r^{\alpha,1}_n}|\tau^{i}_n - \tau^{i+1}_n|  = \sum\limits_{i=-n}^{r^{\alpha,1}_n}|\delta_n^{i+1}| \lesssim n^{\alpha -1} \lesssim n^{- \alpha-\delta},
\end{align}
for $0<\alpha<\frac{1}{2}-\delta$ and $0<\delta\ll1$.
We further claim, that 
\begin{align*}
|V_{1}-Q_nV_{1}|\lesssim  n^{-\frac{\alpha}{4}}.
\end{align*}
Indeed, from the previous estimates we obtain 
\begin{align*}
\left|V_{1}\begin{pmatrix} 1\\ -1 \end{pmatrix}-Q_n V_1 \begin{pmatrix} 1\\-1 \end{pmatrix} \right| &\leq \left| V_1 \begin{pmatrix} 1\\-1 \end{pmatrix} - \tau^{r^{\alpha,1}_n}_n\right| + \left|Q_n V_1 \begin{pmatrix} 1\\-1 \end{pmatrix} -\tau^{r^{\alpha,1}_n}_n \right| \\ &=
|\tau-\tau^{r^{\alpha,1}_n}_n| + |Q_n \tau  -\tau^{r^{\alpha,1}_n}_n |\\
&\lesssim   n^{-\frac{\alpha}{4}}.
\end{align*}
Combining this with the fact that 
\begin{align*}
\left|V_{1}\begin{pmatrix} 1\\ 1 \end{pmatrix}-Q_n V_1 \begin{pmatrix} 1\\1 \end{pmatrix} \right| 
&= \left|\begin{pmatrix} a\\ b \end{pmatrix}-Q_n \begin{pmatrix} a\\b \end{pmatrix} \right| \\
&= \left|\tilde{Q}\begin{pmatrix} a\\ b \end{pmatrix}- \tilde{Q}Q_n  \begin{pmatrix} a\\b \end{pmatrix} \right|
=\left|\begin{pmatrix} a\\ -b \end{pmatrix}-Q_n \begin{pmatrix} a\\-b \end{pmatrix} \right|  \\
&= \left|V_{1}\begin{pmatrix} 1\\ -1 \end{pmatrix}-Q_n V_1 \begin{pmatrix} 1\\-1 \end{pmatrix} \right|
\lesssim  n^{-\frac{\alpha}{4}},
\end{align*}
yields the claim. Here $\tilde{Q}$ denotes the rotation matrix mapping the vector $\begin{pmatrix} a\\ b \end{pmatrix}$ to the vector  $\begin{pmatrix} a\\ -b \end{pmatrix}$. In the second line we made use of the commutativity of $SO(2)$.\\
Thus, this leads to
\begin{equation}
\label{eq:graddiff}
\tag{\ref{eq:taudiff}'}
\begin{split}
|\nabla u^{r^{\alpha,1}_n0}_n-V_{1}\pm n^{\alpha}(\tau^{r^{\alpha,1}_n}_n-\tau)| &\leq 
 n^{-\frac{\alpha}{4}}+ |n^{\alpha}(\tau^{r^{\alpha,1}_n}_n-\tau)|\\ &\lesssim  n^{-\frac{\alpha}{4}} + n^{\alpha}n^{- \alpha - \delta}\\
& \lesssim  \max\{n^{-\frac{\alpha}{4}},n^{-\delta}\}.
\end{split}
\end{equation}
Using Lemma 4.1 for $m(n):=n^{\alpha}$ and (after a possible vertical translation of the original function, which we suppress in the sequel) 
for $- m\le j\le m$, we set 
\begin{equation}
\label{eq:testf}
\begin{split}
\tilde{u}^{i-j j}_{m} := \left\{
\begin{array}{ll}
\frac{u^{i-jj}_{n}}{\la} &\mbox{ for } 0\leq i<r^{\alpha,1}_n,\\
 V_1\begin{pmatrix} i-j\\ j \end{pmatrix} - V_1\begin{pmatrix} r^{\alpha,1}_{n}\\ 0  \end{pmatrix}
+ \frac{u_{n}^{{r^{\alpha,1}_n0}}}{\la} &\mbox{ for } i \geq r^{\alpha,1}_n.\\
\end{array}
\right.
\end{split}
\end{equation}
Therefore, invoking Lemma \ref{lem:av} with e.g. $\epsilon(n):=n^{-\delta}$, it holds
\begin{align}
\label{eq:enest}
H^1_n \geq H^1_{n,m}(\tilde{u}_m)-\epsilon(n)=H^1_{m,m}(\tilde{u}_m)- \epsilon(n).
\end{align}
We stress that thus $\tilde{u}_{m}$ is an admissible test function -- satisfying in particular (\ref{eq:rad}) -- for the minimum problem
defining $B^{+}(F_{\lambda},V_1,r_{0}^{m})$ on the scale $m=n^{\alpha}$, where
\begin{align*}
r_{0}^{m(n)}:=\frac{u_{n}^{r^{\alpha,1}_n 0}}{\la} - V_1\begin{pmatrix} r^{\alpha,1}_n\\ 0 \end{pmatrix}.
\end{align*}
Using \rf{eq:graddiff}, we estimate
\begin{align}
H_{n}^{1} & \geq  \lambda_m  \sum\limits_{i=-m}^{m}\sum\limits_{j=-m}^{m}h\left(\frac{u^{i}_n - u^{i\pm 1}_n}{\la},\tau^i_n,\,\tau_n^{i\pm 1}, j\right)- \epsilon(n)
\nonumber\\
&\geq H_{m,m}^1(\tilde{u}_{m})- w_1(n)- \epsilon(n) \nonumber\\
&\geq B^{+}(F_{\lambda}, V_1, r_{0}^{m}) - w_1(n)- \epsilon(n),
\lb{1BL}
\end{align}
\begin{align*}
|w_1(n)|\lesssim \left\|\nabla u_n- V_1\right\|_{L^{\infty}(\Omega(\la r^{\alpha,1}_{n},\la r^{\alpha,1}_n+\la))}+o(1)\go 0
\end{align*}
as $n\go\infty$. Thus, in the limit $n \rightarrow \infty$, estimate \rf{1BL} implies:
\be
\displaystyle \liminf\limits_{n\rightarrow \infty} H_n^1\ge \inf_{r_0} B^+(F_\lambda,V_1,r_0).
\lb{EEs1}
\ee
\emph{Step 2: Internal layers.}
For the remaining intervals we argue analogously: For each $s\in\{2,...,K-1\}$ and each $0<\alpha\leq \frac{1}{2}-\delta$ as above,
there exist integers ${l}^{\alpha,s}_{n}\in [k_{n}^{s}-n^{\alpha}, k_{n}^{s} ]$, and 
 ${r}^{\alpha, s}_{n}\in  [k_{n}^{s}, k_n^s+n^{\alpha} ]$ such that $\nabla u_n^{ij}$ stays $O(n^{-\alpha/4})$ close to certain rotations of $V_s$ and $V_{s+1}$ in the domains 
$\Omega({l}^{\alpha,s}_{n} \la ,{l}^{\alpha,s}_{n}\la+\la)$ and
$\Omega({r}^{\alpha,s}_{n}\la,{r}^{\alpha s}_{n}\la+\la)$.\\
Moreover, by an argument which is similar to the one used for the left boundary layer,
we may without loss of generality assume that the deformation gradients (and
in particular the associated rotations) on the left and on the right hand side
of the boundary layer are $O(n^{-\alpha/4})$  close to $V_{s-1}$ and
$V_{s}$, respectively. Indeed by similar considerations as before, we may
first assume that there exists a single rotation $Q_n^s\in SO(2)$ such that
$\nabla u^{ij}_n$ is $O(n^{-\alpha/4})$ close to $Q_n^s V_{s-1}$ and
$Q_n^s V_{s}$ on the left and right hand side neighborhoods of the jump
interface, respectively. Secondly, by switching from $u_{n}^{ij}$ to
$\bar{u}^{ij}_n:=(Q^{s}_n)^{-1} u^{ij}_n$ in the respective $n^{\alpha}$
neighborhoods of the interface, we may assume that the deformation is close to
$V_{s-1}$ and $V_{s}$ on the left and right hand sides of the jump layer,
respectively. In the sequel, we assume that -- if necessary -- the appropriate rotation has already been carried out and omit the bars in the notation.\\
Again invoking Lemma \ref{lem:av} and setting $m(n):=n^{\alpha}$, 
(after a possible vertical translation) we define a new horizontally truncated deformation by
\begin{align*}
\tilde{u}^{i-j j}_{m}:= \left\{
\begin{array}{ll}
V_{s-1}\begin{pmatrix} i-j\\ j \end{pmatrix} + V_{s-1}\begin{pmatrix}{l}_n^{\alpha,s}\\ 0 \end{pmatrix}+ \frac{u_{n}^{{l}_n^{\alpha,s} 0}}{\la} &
\mbox{ for } i \leq {l}_n^{\alpha,s},\\
\frac{u_{n}^{i-jj}}{\la}  & \text{ for } {l}_n^{\alpha,s}<i<{r}_n^{\alpha,s},\\
V_s\begin{pmatrix} i-j\\ j \end{pmatrix} -V_s\begin{pmatrix} {r}_n^{\alpha,s} \\ 0 \end{pmatrix} + \frac{u_{n}^{{r}_n^{\alpha, s}0}}{\la}
& \mbox{ for } i \geq {r}_n^{\alpha,s},
\end{array} \right.
\end{align*}
for $-m \leq j \leq m$ and note that $\tilde{u}_{m}\in\mathcal{A}_{m,\tau}^r$. Thus, $\tilde{u}_m$ is an admissible test function for the internal energy layer
$C(V_{s-1},V_s, r^{m}_{s-1})$ on the scale $m=n^{\alpha}$. Estimating the energies thus yields 
\begin{align}
H_{n}^{s} \geq & \ \lambda_m \sum\limits_{i={l}_{n}^{\alpha,s}+1}^{{r}^{\alpha,s}_n}\sum\limits_{j=-m}^{m}h\left(\frac{u^{i}_n - u^{i\pm 1}_n}{\la},\tau^i_n,\,\tau_n^{i\pm 1}, j\right) -w_{s}(n)-\epsilon(n) \nonumber\\
=& \ H_{m,m}^{s}(\tilde{u}_{m})- w_s(n)- \epsilon(n)\nonumber\\
\geq& \ C(V_{s-1}, V_s, r_{s-1}^{m}) - w_s(n)- \epsilon(n),
\lb{EEs2} 
\end{align}
where 
\begin{align*}
r_{s-1}^{ m(n)}:=- V_{s-1}\begin{pmatrix} {l}_n^{\alpha,s}+1\\ 0\end{pmatrix}- \frac{u_{n}^{{l}_n^{\alpha,s}+10}}{\la}
-V_s\begin{pmatrix} {r}_n^{\alpha,s} \\ 0\end{pmatrix} + \frac{u_{n}^{{r}_n^{\alpha,s}0}}{\la}.
\end{align*}
and 
\begin{align*}
|w_s(n)|\lesssim & \ \left\| \nabla u_n-V_{s-1}\right\|_{L^{\infty}(\Omega(\lambda_{n} l^{\alpha,s}_{n},\lambda_{n} l^{\alpha,s}_{n}+\la ))}\\
&+ \left\| \nabla u_n-V_s\right\|_{L^{\infty}(\Omega(\lambda_{n}r^{\alpha,s}_{n} ,\lambda_{n}r^{\alpha,s}_{n}+\la))}+o(1)\go 0,
\end{align*}
as $n\go\infty$.\\

\emph{Step 3: The right boundary layer.}
Finally, for the right boundary layer, we argue as in the case of the left
boundary layer. Denoting by ${l}^{\alpha,(K-1)}_{n}$,  the
corresponding integer in the interval $[n-n^{\alpha},n]$, recalling Lemma \ref{lem:av} and (after a possible vertical translation) setting 
\begin{align*}
\tilde{u}^{i-j j}_{m}:= \left\{
\begin{array}{ll}
V_{K-1}\begin{pmatrix} i-j\\ j \end{pmatrix}+ V_{K-1}\begin{pmatrix} l^{\alpha,(K-1)}_{n} \\0\end{pmatrix} + \frac{u_{n}^{{l}^{\alpha,(K-1)}_{n}0}}{\la} 
& \mbox{ for } i \leq {l}^{\alpha,(K-1)}_{n},\\
\frac{u_{n}^{i-jj}}{\la} &  \mbox{ for } i> {l}^{\alpha,(K-1)}_{n},
\end{array} \right.
\end{align*}
for $m=n^{\alpha}$ and $-m\leq j \leq m$, implies 
\begin{align}
H_{n}^{K} &\geq  \lambda_m \sum\limits_{i=n-m}^{n}\sum\limits_{j=-m}^{m}
h\left(\frac{u^{i}_n - u^{i\pm 1}_n}{\la},\tau^i_n,\,\tau_n^{i\pm 1}, j\right)-\epsilon(n)\nonumber\\
& \geq
H_{n,m}(\tilde{u}_{m})-w_K(n)-\epsilon(n)\nonumber\\
&\geq B^{-}(V_{K-1},F_{\lambda},r_{K}^m)-w_K(n)-\epsilon(n),
\lb{EEs3}
\end{align}
where 
\bes
r_{K}^{ m(n)}:=V_{K-1}\begin{pmatrix} {l}^{\alpha,(K-1)}_{n}\\ 0 \end{pmatrix} + \frac{u_{n}^{{l}^{\alpha,(K-1)}_{n} 0}}{\la},
\ees 
and $|w_K(n)|\lesssim \left\|\nabla
u_n-V_{K-1}\right\|_{L^{\infty}(\Omega(\lambda_{n}l_{n}^{\alpha,(K-1)},\lambda_{n}l_{n}^{\alpha,(K-1)}+\la))}+o(1)$
as $n\go\infty$.
Combining the estimates \rf{EEs1}--\rf{EEs3}, we finally infer the desired inequality \rf{EEs}.
Hence the lemma and therefore, the $\Gamma$-$\liminf$ inequality in the case of non-degenerate intervals, are proved.
\end{proof}

In the case of degenerate intervals we mainly argue along the same lines as
above. In this case the sequence $u_n$ may have
more transition layers than the limiting function $u$. As degenerate intervals possibly yield additional transitions 
(with the length scale $\alpha$ possibly chosen accordingly to the
  degeneracy of their lengths as $n\go\infty$ -- but always keeping $\alpha$ bounded from below by an $n$-independent constant) one may rely on the triangle inequality for the boundary layer energies, e.g. in the form of the estimate:
\begin{align*}
\inf\limits_{r_{s-1}} C(V_{s-1},V_s,r_{s-1}) + \inf\limits_{r_s} C(V_s,V_{s+1},r_s) \geq \inf\limits_{r_{s-1}}C(V_{s-1},V_{s+1},r_{s-1}).
\end{align*}
Hence, the $\Gamma$-$\liminf$ inequality also holds in this setting. 
\end{proof}

\begin{remark}
For later reference, we summarize the essential modification steps which were used in the $\Gamma$-$\liminf$ inequality and refer to them as a ``cutting procedure''. As outlined in the proof of the $\Gamma$-$\liminf$ inequality, it involves 
\begin{itemize}
\item finding integer points close (more precisely, $n^{\alpha}$-close) to the expected jump layers at which the configuration transforms from one of the wells to the other well (and possibly back to the original well), c.f. equation (\ref{eq:alpha}),
\item transferring the good estimates on neighboring $\tau^i_n$, c.f. \rf{des}, to the $\alpha$-scale in order to avoid self-interpenetration of the material, c.f. equations (\ref{eq:taudiff}) and (\ref{eq:graddiff}) which is possible for any $\alpha<1/2$,
\item horizontally gluing the right rotation to the prescribed sequence at the respective
  $\alpha$-close points, c.f. (\ref{eq:testf}),
\item using Lemma \ref{lem:av} in order to pass from the original sequence,
  which had a scale $\sim n$, to a modified sequence which is reduced to a
  scale $m=n^{\alpha}$ in both the horizontal and the vertical directions. 
\end{itemize}
We emphasize that both the second and the fourth steps are essential in preserving the rescaled non-interpenetration condition  (\ref{eq:rad}), 
c.f. (\ref{eq:graddiff}).
\end{remark}

Keeping the previous comments in mind, we continue with the proof of the $\Gamma$-$\limsup$ inequality. Again, this is
slightly more involved than the strategy proposed in~\cite{BC07}, as we again have to preserve the
admissibility conditions, i.e. $u_n \in \mA_{n,\tau}$.
In particular, we are confronted with the presence of an $n$-dependence in the functional which we analyze. 
As before, the key tools consist of a version of the averaging lemma (Lemma 4.1) and the ``cutting procedure'' introduced in Lemma 4.2.

\begin{proof}[Proof of the $\Gamma$-$\limsup$ inequality]
\emph{Step 1: Preparation.}
Let $u$ be such that $E_{surf}(u)<\infty$. Then there exist $K\in\N,\,V_s\in\{U_0,\,QU_1\},\,s\in \{ 1,...,K-1 \}$
and $-1=x_1<x_2,...,x_{K-1}<x_K=1$ such that 
\bes
\nabla u(z)=V_s\quad\text{for}\quad z\in Q(x_s,\,x_{s+1}),\ s\in \{1,...,K-1 \}.
\ees
For a fixed but arbitrary $\eps>0$ let $r=[r_0,...,r_K]$ be such that 
\begin{equation}
\label{eq:offset}
B^+(F_{\lambda},V_0,r_{0})+\sum\limits_{s=2}^{K-2}C(V_{s-1},V_s, r_{s})+B^-(V_{K-1},F_{\lambda},r_{K})
\le E_{surf}(u)+\eps.
\end{equation}
By definition of the internal layer energies, for every $s\in
\{2,...,K-2\}$ there exist {\it an infimizing} subsequence $\{n_{m_s}\}_{m_s \in \N} \subset \N$
with $n_{m_s} \rightarrow \infty$ and functions $u_{n_{m_s}}\in \mathcal{A}_{n_{m_s},\tau}^r$ such that
\begin{align}
\lim\limits_{m_s \rightarrow \infty} H_{n_{m_s}}^s\left(\lambda_{n_{m_s}}u_{n_{m_s}}\left( \frac{\cdot }{\lambda_{n_{m_s}}} \right) \right)=C(V_s,V_{s+1},r_s).
\lb{is}
\end{align}
Analogous statements hold for the boundary layers.
We would like to use the subsequences $\{n_{m_s}\}_{m_s\in \N}$ and the
functions $u_{n_{m_s}}$ which realize the $\liminf$ in Definition
\ref{defi:bl} in order to define a recovery sequence for the $\Gamma$-$\limsup$
inequality.  For this, however, we have to fill the gaps in the subsequence $u_{n_{m_s}}$ in a way which does not raise the energies contributing to $\limsup\limits_{n\rightarrow \infty} H_n(u_n)$. In order to deal with this difficulty, we make use of an ``energy partitioning''  or ``averaging argument'' in the spirit of Lemma \ref{lem:av}. \\

\emph{Step 2: Averaging and collecting properties of the layer energies.}
In the sequel, we focus on a single transition layer, say on the transition
layer $C(V_1,V_2,r_1)$. All the other boundary and internal layer
contributions can be treated analogously.  We denote the subsequence realizing the
$\liminf$ in Definition \ref{defi:bl} by $\{n_{m}\}_{m\in\N}$ and the
associated functions by $u_{n_m}$. These functions are defined on $\Z \times
[-n_m,n_m]$ and fulfill the rescaled admissibility condition (\ref{eq:rad}).
 
We centrally use the following variant of Lemma \ref{lem:av}:

\begin{lem}[Energy Partitioning/Averaging]
\label{lem:en}
Let $n\in \N$ and $\epsilon>0$ be arbitrary. Then there exists $\bar{n}\in \{n_m\}_{m\in \N}$ with $\bar{n}\gg n$, $\bar{n}=\bar{n}(\epsilon,n)$, such that there exists a function $u_{\epsilon}:\mathbb{Z} \times [-n,n] \rightarrow \mathbb{R}$,  $u_{\epsilon}\in  \mathcal{A}_{n,\tau}^r$ satisfying 
\begin{enumerate}
\item[(i)]$u_{\epsilon}\left(\cdot-\begin{pmatrix} l\\ 0 \end{pmatrix}\right)\in \mathcal{A}_{n,\tau}^r$, for all $l\in\mathbb{Z}$,
\item[(ii)] $u_{\epsilon}^{i-j j}=V_{1}\begin{pmatrix} i-j \\ j \end{pmatrix} , \ i\leq -\bar{n}$ and  
$u_{\epsilon}^{i-j,j} =V_2\begin{pmatrix} i-j \\ j \end{pmatrix} + r_1, \ i\geq \bar{n}$,
\item[(iii)] $ H^{1}_{\bar{n},n}(u_{\epsilon}^{ij}) \leq H^{1}_{\bar{n},\bar{n}}(u_{\bar{n}}^{ij})+\epsilon. $
\end{enumerate}
\end{lem}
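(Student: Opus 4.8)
The plan is to derive Lemma \ref{lem:en} from the averaging Lemma \ref{lem:av} essentially by the same pigeonhole/averaging device, but now applied to the \emph{infimizing} sequence $\{u_{n_m}\}$ for the internal layer energy $C(V_1,V_2,r_1)$ rather than to a single function. First I would fix $n$ and $\epsilon$; since $H^1_{n_m,n_m}(u_{n_m}) \to C(V_1,V_2,r_1) < \infty$, there is a uniform bound $H^1_{n_m,n_m}(u_{n_m}) \le C < \infty$ along the tail of the subsequence. Pick $\bar n = n_m$ with $\bar n \gg n$, large enough that $\bar n > n\bigl(1 + \tfrac{C}{\epsilon}\bigr)$ (so Lemma \ref{lem:av} is applicable with the roles $m \leftrightarrow n$, $n \leftrightarrow \bar n$) and also large enough that the boundary data of $u_{\bar n}$ are ``moved far out'', i.e. $u_{\bar n}^{i-j\,j} = V_1\binom{i-j}{j}$ for $i \le -\bar n$ and $= V_2\binom{i-j}{j} + r_1$ for $i \ge \bar n$, which holds by admissibility of $u_{\bar n}$ in the class defining $C(V_1,V_2,r_1)$ in \rf{IL}.

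Next I would run the subdivision argument of Lemma \ref{lem:av}: partition the vertical extent $[-\bar n,\bar n]$ into $\lfloor \bar n/n\rfloor$ disjoint horizontal strips $I_k$ of height $2n$ (plus a remainder of height $<n$), together with the accompanying horizontal shifts $\tau^k := (\bar n-(k+1)n)\binom{-1}{1}$ needed to keep the parallelogram geometry of $\Omega_{\bar n}^r$ consistent after vertical translation. The pigeonhole estimate — exactly estimate \rf{eq:esten} with the substitutions above — shows that for at least one index $k$ one has $H^1_{\bar n,n}\bigl(u_{\bar n}(\cdot+\tau^k)\bigr) \le H^1_{\bar n,\bar n}(u_{\bar n}) + \epsilon$. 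Defining $u_\epsilon := u_{\bar n}(\cdot + \tau^k)$ for such a $k$ gives property (iii) directly. Property (i), that $u_\epsilon$ and all its horizontal translates $u_\epsilon(\cdot - \binom{l}{0})$ remain in $\mathcal{A}_{n,\tau}^r$, follows because horizontal translation by an integer vector $\binom{l}{0}$ maps the lattice $\Omega_n^r$ to itself, preserves all nearest-neighbour differences $u^{i\pm1\,j} - u^{ij}$ and $u^{ij\pm1} - u^{ij}$, hence preserves both the non-interpenetration determinant conditions in \rf{An} and the chain constraint $u^{i+1\,j} - u^{i\,j+1} = -\tau^{i+j+1}$ in \rf{eq:rad}; the vertical shift $\tau^k$ is built to respect the parallelogram and the same argument applies to it.

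For property (ii) I would note that the vertical translation by $\tau^k$ moves the ``already-linear'' regions $i \le -\bar n$ and $i \ge \bar n$ of $u_{\bar n}$ only by a bounded amount, and since those regions are affine (equal to $V_1 x$, resp.\ $V_2 x + r_1$) a translation of an affine function is again affine with the same gradient — one only has to absorb the induced constant shift, but the condition is stated only for $|i| \ge \bar n$ where $u_\epsilon$ is still exactly one of these two affine maps (possibly up to re-choosing $r_1$ by the bounded translation, which one folds into the definition of $u_\epsilon$, or one simply observes the affine pieces extend unboundedly so the boundary behaviour is unchanged). Thus $u_\epsilon \in \mathcal{A}_{n,\tau}^r$ is admissible for the minimisation defining $C(V_1,V_2,r_1)$ on the scale $n$ and, by (iii), has energy within $\epsilon$ of the infimizing value. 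The main obstacle I anticipate is purely bookkeeping rather than conceptual: making sure the vertical shifts $\tau^k$ are chosen so that the translated domain still tiles consistently inside $\Omega_{\bar n}^r$ and that the affine boundary data survive the translation exactly (not merely approximately), so that $u_\epsilon$ is \emph{genuinely} admissible in \rf{IL} and not just close to admissible — in a two-dimensional chain setting, unlike the one-dimensional case of \cite{BC07}, this compatibility between the shift and the parallelogram geometry is the point that needs care.
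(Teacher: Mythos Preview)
Your proposal is correct and follows essentially the same approach as the paper's own proof: choose $\bar n \in \{n_m\}$ large enough (the paper takes $\bar n > n(1 + 2C(V_1,V_2,r_1)/\epsilon)$), rerun the pigeonhole argument of Lemma~\ref{lem:av} with the domains $\Omega_n^r$, $\Omega_{n,m}$ replaced by $\Z\times[-\bar n,\bar n]$, $\Z\times[-n,n]$ to obtain (iii), then read off (ii) from the boundary data of $u_{\bar n}$ and (i) from the fact that integer horizontal translations preserve admissibility. Your discussion of the bookkeeping for (ii) is in fact more careful than the paper's one-line ``follows from the definition of $u_{\bar n}$''; the resolution you sketch (absorbing the induced constant shift, or noting that the affine pieces extend unboundedly) is exactly what is needed, and one can also observe that since $V_1\binom{-1}{1}=V_2\binom{-1}{1}=\tau$ the vertical shift $\tau^k$ produces the \emph{same} additive constant on both affine wings, so the offset $r_1$ is preserved exactly after a global translation.
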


\begin{proof}
The proof of the Lemma essentially follows along the same lines as the proof
of Lemma \ref{lem:av} with $\Omega_n^r$ and $\Omega_{n,m}$ replaced by
$\mathbb{\Z}\times [-\bar{n},\bar{n}]$ and $\Z \times [-n,n]$: For a given $n\in \N$ we argue as in Lemma 4.1 and choose 
$\bar{n}>n\left(1+ 2\frac{C(V_1,V_2,r_1)}{\epsilon} \right)$, $\bar{n}\in\{n_m\}_{m\in\N}$. 
This yields the existence of $u_{\epsilon}$ satisfying condition (iii). Condition (ii) follows from the definition of $u_{\bar{n}}$. 
Finally, condition (i) is a consequence of the fact that integer valued, horizontal translations do not change the admissibility of the sequence. 
\end{proof}

Lemma \ref{lem:en} thus allows us to extend the infimizing subsequences
 $\{u_{n_{m_s}}\}$ in \rf{is} for each $s \in \{1,...,K-1\}$ to full sequences $\{u_n^s\},\ n\in\N$, such that for
 $\eps>0$ fixed  and all sufficiently large $n$ one has
\begin{align}
\label{eq:enn}
H_{n}^1\left( \lambda_n u_n^{s }\left( \frac{\cdot}{\la} \right) \right)\leq\left\{ \begin{array}{ll} 
B^+(F_{\lambda},V_1,r_{0})+ c\epsilon &\mbox{ if } s=0,\\
C(V_s,V_{s+1},r_s) +c\epsilon &\mbox{ if } s \in \{1,...,K-2\}, \\
B^-(V_{K-1},F_{\lambda},r_{K-1}) +c\epsilon &\mbox{ if } s=K-1.
\end{array} \right.
\end{align}
We now prepare for patching together the various $u_n^s$ originating from
the different internal and boundary layer energies. In this context the ``cutting procedure''  summarized in Remark 4.1 plays an essential role.\\

The next lemma shows that  after slight modifications which, however, preserve the
estimates \rf{eq:enn} (up to additional $O(\eps)$ terms), the sequence $\{u_n^s\}$ can be chosen to converge
to a function $u_s$ with $\nabla u_s \in \{V_s, V_{s+1}\}$ whose gradient only has a single jump layer: 

\begin{lem}
\label{lem:enI}
Let $0<\alpha< \frac{1}{2}-\delta$, $0< \delta \ll1 $ and $0<\alpha<\beta < \alpha+\delta/2$ be fixed but arbitrary. 
For each $s \in \{1,...,K-1\}$ it is possible to modify the sequence
$\{u_n^s\}_{n\in\N}$ given above such that for $m(n)\sim n^{\beta}$
the new sequence $\{\tilde{u}_{m}^s\}_{m\in \N},\ \tilde{u}_{m}^s\in\mathcal{A}_{m,\tau}^r$ remains admissible and has exactly one transition
layer between $V_s$ and $V_{s+1}$ of the width $O(n^{\alpha})$ (see Fig. 3) which is (without
loss of generality) centered at 0. More precisely, there exists $i_n^s\in\N$ and
a constant $c$ such that
\bea
&&i_n^s \lesssim n^{\alpha},\nonumber\\
&&\nabla \tilde{u}_{m}^{ s,ij}=V_s\quad\text{for}\quad i<-i_n^s,\quad\nabla\tilde{u}_{m}^{ s,ij}=V_{s+1}\quad\text{for}\quad i>i_n^s
\lb{Ms16}
\eea
and $-m \le j\le m$.
Moreover, $\tilde{u}_{m}^s$ preserves the energy estimates (\ref{eq:enn}) and 
\bea
\lb{Ms14}
&&\lambda_{m} \tilde{u}_{m}^s\left( \frac{\cdot}{\lambda_m} \right) \go \tilde{u}^s\quad\text{in}\quad W^{1,4}(\Omega_s)\cap L^{\infty}(\Omega_s),
\eea
with $\nabla \tilde{u}^s=V_s$ for $x<0$ and $\nabla \tilde{u}^s=V_{s+1}$ for $x>0$.
Here we use the notation $\Omega_s:=\Omega\left(\frac{x_{s-1}-x_s}{2},\frac{x_{s+1}-x_s}{2}\right)$.
\end{lem}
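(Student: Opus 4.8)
The plan is to apply the ``cutting procedure'' from Remark 4.1 to the infimizing sequence $\{u_n^s\}$ produced by Lemma \ref{lem:en}, thereby compressing the transition region to a window of width $O(n^\alpha)$ while paying only an $O(\epsilon)$ price in energy. First I would note that by Proposition \ref{prop:comp} (applied to the rescaled functions on $\Omega_s$, which have uniformly bounded rescaled energy by \rf{eq:enn}) the sequence $\lambda_n u_n^s(\cdot/\lambda_n)$ is precompact in $W^{1,4}$ and its limit is piecewise constant in the gradient, with gradient values in $\{U_0, QU_1\}$. Since, by the boundary data built into the definitions \rf{BL}, \rf{IL} of $B^\pm$ and $C$, the gradient equals $V_s$ for $i\le -\bar n$ and $V_{s+1}$ for $i\ge\bar n$, and since any rank-one line between the wells meets the other well exactly once (the key geometric observation used in Step 2 of Theorem \ref{T1}), the limiting gradient can have only a single jump between $V_s$ and $V_{s+1}$ — so after discarding finitely many bad intervals as in Theorem \ref{T1}, passing to a further subsequence, and translating horizontally so the jump sits at $0$, \rf{Ms14} holds with the stated limit.

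Next I would locate the integer cutting points. By the simultaneously-good-point analysis of Step 2 of Theorem \ref{T1}, applied to $u_n^s$, for any $0<\alpha<\tfrac12-\delta$ there exist integers $l_n^{\alpha,s}\in[-i_n^s - n^\alpha, -i_n^s]$ and $r_n^{\alpha,s}\in[i_n^s, i_n^s + n^\alpha]$ with $i_n^s\lesssim n^\alpha$ such that on $\Omega(\lambda l_n^{\alpha,s},\lambda l_n^{\alpha,s}+\lambda)$ and $\Omega(\lambda r_n^{\alpha,s},\lambda r_n^{\alpha,s}+\lambda)$ the gradient $\nabla u_n^s$ is $O(n^{-\alpha/4})$-close to fixed rotations of $V_s$ and $V_{s+1}$, respectively, and the shift-discrepancies satisfy $|n\,\delta_n^{l_n^{\alpha,s}\pm1}|, |n\,\delta_n^{r_n^{\alpha,s}\pm1}|\lesssim n^{-\alpha/4}$. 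As in Step 2 of Lemma \ref{lem:cut} one may, after multiplying by a single rotation $(Q_n^s)^{-1}$ in the relevant $n^\alpha$-neighborhoods, assume these rotations are the identity, i.e. the deformation is genuinely $O(n^{-\alpha/4})$-close to $V_s$, resp. $V_{s+1}$, there. Then I would define the truncated test function $\tilde u_m^s$ exactly as in the proof of Lemma \ref{lem:cut}: keep $u_n^s/\lambda$ for $l_n^{\alpha,s}<i<r_n^{\alpha,s}$, and extend by the affine maps $V_s\binom{i-j}{j}+\text{const}$ for $i\le l_n^{\alpha,s}$ and $V_{s+1}\binom{i-j}{j}+\text{const}$ for $i\ge r_n^{\alpha,s}$, choosing the constants so that the deformation is continuous. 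Invoking Lemma \ref{lem:av} with $m(n)\sim n^\beta$, $\alpha<\beta<\alpha+\delta/2$, reduces the vertical scale from $\bar n$ to $m$ at the cost of $\epsilon$, and the horizontal gluing changes the energy only by $\|\nabla u_n^s - V_s\|_{L^\infty(\dots)} + \|\nabla u_n^s - V_{s+1}\|_{L^\infty(\dots)} + o(1)\to 0$; combined with \rf{eq:enn} this gives the preserved estimates. Property \rf{Ms16} is immediate from the construction (the extensions are exactly affine), and \rf{Ms14} follows since the cut-out region has horizontal width $O(n^{\alpha-1})\to 0$.

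The main obstacle is verifying that the horizontal extension does \emph{not} violate the rescaled non-interpenetration condition \rf{eq:rad}, and that the resulting $\tilde u_m^s$ genuinely lies in $\mathcal{A}_{m,\tau}^r$ with the translation-invariance (i). The issue is that the glued affine piece must connect to $u_n^s$ across the cut line with a nonnegative Jacobian on every lattice triangle, and this requires the vertical extension vectors $\tau_n^i$ on the original side to match $V_s\binom{1}{-1}=-\tau$ (resp. $V_{s+1}\binom{1}{-1}$) up to a small error — which is precisely why one transfers the good bound on neighboring $\tau_n^i$ to the $\alpha$-scale via a telescoping estimate $|\tau - \tau_n^{l_n^{\alpha,s}}|\lesssim n^\alpha \cdot n^{-1} = n^{\alpha-1}$, as in \rf{eq:taudiff}. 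This forces $\alpha<\tfrac12$ (so that even after multiplying the gradient-type error by $n^\alpha$ in \rf{eq:graddiff} one still gets a vanishing bound $\max\{n^{-\alpha/4}, n^{-\delta}\}$), and the gap $\beta>\alpha$ guarantees that the averaging in Lemma \ref{lem:av} is legitimate (i.e. $m < \bar n(1 + C/\epsilon)^{-1}$-type constraints are compatible) while $\beta<\alpha+\delta/2$ keeps the interpenetration error under control on the new scale. Once these admissibility checks are in place — which are routine but delicate determinant sign computations on the glued triangles, entirely analogous to those implicit in Lemma \ref{lem:cut} — the lemma follows.
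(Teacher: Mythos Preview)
Your proposal follows the paper's approach closely: locate cutting points in an $n^\alpha$-neighbourhood of a transition via the simultaneously-good-point analysis, rotate by $(Q_n^s)^{-1}$ to reduce to the identity, retain $u_n^s$ on the transition window and extend affinely by $V_s$, $V_{s+1}$ outside it, then invoke Lemma~\ref{lem:av} at scale $m\sim n^\beta$.

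One step needs correction. Your assertion that the limit of the infimizing sequence has \emph{only one} jump, justified by the rank-one-line observation from Step~2 of Theorem~\ref{T1}, is not valid: that argument rules out transitions between the wells along the \emph{vertical} (chain) direction, not multiple \emph{horizontal} transitions. A priori the limit of $u_n^s$ could alternate $V_s,V_{s+1},V_s,V_{s+1}$ finitely often. The paper does not claim uniqueness of the jump for the original sequence; instead it selects \emph{one} interface at which the deformation passes from $V_s$ to $V_{s+1}$ (such an interface must exist by the boundary data built into Definition~\ref{defi:bl}), performs the cut there, and observes that ``any further transition from $V_{s+1}$ to $V_s$ costs an additional finite amount of energy'', so that discarding the remainder cannot raise the energy and \rf{eq:enn} is preserved. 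Property~\rf{Ms14} is then established for the \emph{modified} sequence $\tilde u_m^s$ (not for $u_n^s$), and it follows because the retained transition layer has rescaled width $O(n^{\alpha}/m)=O(n^{\alpha-\beta})\to 0$; your $O(n^{\alpha-1})$ uses the wrong rescaling, since the new scale is $\lambda_m\sim n^{-\beta}$, not $\lambda_n$. With this reordering --- cut first, then read off the single-jump limit for $\tilde u_m^s$ --- your argument is complete and matches the paper's.
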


\begin{proof}
By choosing $\bar{n}$ sufficiently large compared to $n$ in the application of Lemma \ref{lem:en}, we may, without loss of generality, assume that any horizontal layer, i.e. any distance between successive ``bad'' points'' $x_n^{j,s}$, $j\in\{1,...,K^{s}\}$, is either of a size $\lesssim n^{\alpha}$ or at least of the size $\sim n^{\beta}$ and that $u_n^s$ is defined on a strip of the size
$\bar{n}\times n$. By an argument similar to the one used in the proof of the $\Gamma$-$\liminf$ inequality, for each element $u_n^s$ of the sequence, one may choose integers $l_n^{\alpha ,s}\in\Z$,
$r_n^{\alpha,s}\in\Z$ which are in an $n^{\alpha}$ neighborhood of an interface between $V_s$ and $V_{s+1}$ and where $\nabla u_n^s$ is
$O(n^{-\alpha/4})$ close to $V_s$ and $V_{s+1}$, respectively. Indeed, in order to obtain this reduction we argue as in Step 2 of the proof of Lemma \ref{lem:cut}: By the considerations carried out in the proof of Lemma \ref{lem:cut} we may firstly assume that the gradient is in a neighborhood of $Q_n V_s$ and $Q_n V_{s+1}$ on the left and right sides of the jump interface for some $Q_n\in SO(2)$; secondly by defining $\tilde{u}^{ij}:= Q_n^{-1} u^{ij}$ we may then assume $Q_n=Id$ without violating the admissibility of the sequence.
At such positions we cut  $u_n^s$ and extend it by a deformation gradient given by
$V_s$ or $V_{s+1}$ as in the proof of the $\Gamma$-$\liminf$-inequality. More precisely, (after a possible vertical translation) we define 
\begin{align*}
\tilde{u}_{m}^{s,i-jj} := \left\{
\begin{array}{ll}
V_{s}\begin{pmatrix} i-j\\ j \end{pmatrix} + V_{s}\begin{pmatrix}{l}_n^{\alpha,s}\\ 0 \end{pmatrix}+ \frac{u_{n}^{{l}_n^{\alpha,s}0}}{\la} &
\mbox{ for } i \leq {l}_n^{\alpha,s}+1,\\
\frac{u_{n}^{i-jj}}{\la} & \mbox{ for } {l}_n^{\alpha,s}<i<{r}_n^{\alpha,s},\\
V_{s+1}\begin{pmatrix} i-j\\ j \end{pmatrix} -V_{s+1}\begin{pmatrix} {r}_n^{\alpha,s} \\ 0 \end{pmatrix} + \frac{u_{n}^{{r}_n^{\alpha, s}0}}{\la} &
\mbox{ for } i \geq {r}_n^{\alpha,s},
\end{array} \right.
\end{align*}
for a certain $m(n)\sim n^{\beta}$ and $-m\leq j \leq m$. Thus, we consider a
transition layer of the size $O(n^{\alpha})$ but instead of cutting out a
square of the size $n^{\alpha}$, we
work on a square of the size $O(n^{\beta})$. This allows to obtain the desired $L^{\infty}$ convergence in (\ref{Ms14}). 
By adapting the exact value of $m$ (i.e. possibly correcting it by a multiplicative constant), it is possible to satisfy \rf{Ms14} on the domain $\Omega_s$.
Under the stated constraints on $\alpha, \beta$ and by invoking Lemma \ref{lem:av}, the
construction remains admissible (in particular (\ref{eq:rad}) can be
satisfied). Moreover, its energy is controlled by \rf{eq:enm} given below, i.e. it satisfies an analog of the energy bound \rf{eq:enn} 
for the modified functions $\tilde{u}_m^s$ (with possibly additional $O(\eps)$ error contributions).
As any further transition from $V_{s+1}$ to $V_{s}$ costs an additional finite amount
of energy, we have thus obtained a construction which does not deteriorate the
energy of the original construction (i.e. satisfies the estimate \rf{eq:enm}) and its gradient attains an appropriate 
deformation from the wells $SO(2)U_0 \cup SO(2)U_1$ on the right and left boundaries of the domains.

Finally, we can translate the function $\tilde{u}_{m}^s$  such that the internal layer
$[l_n^{\alpha,s},r_n^{\alpha, s}]$ is shifted to $[-i_n^s,i_n^s]$, where
$i_n^s:=(r_n^{\alpha,s}-l_n^{\alpha,s})/2$. This does not change the energy.
Therefore, all statements of the lemma follow. 
\end{proof}

\begin{remark}
We remark that the modified sequence of Lemma \ref{lem:enI} which was obtained via the ``cutting procedure'' summarized in Remark 4.1
does not necessarily preserve the shift $r_s$ in the definition of the boundary layer $C(V_s,V_{s+1},r_s)$. However, for our further purposes it
is only necessary that the resulting energy does not exceed an estimate of the form \rf{eq:enn}.
\end{remark}

\emph{Step 3: Conclusion.}
Applying the previous step with an appropriate choice of $\alpha,\beta$, 
we obtain a sequence $\{\tilde{u}_{m}^s\}_{m\in \N}$ which satisfies the
desired energy estimate
\begin{align}
H_{m}^1\left(\lambda_m \tilde{u}_m^s\left( \frac{\cdot}{\lambda_m} \right) \right) \leq \left\{ \begin{array}{ll} 
B^+(F_{\lambda},V_1,r_{0})+ c\epsilon &\mbox{ if } s=0,\\
C(V_s,V_{s+1},r_s) +c\epsilon &\mbox{ if } s \in \{1,...,K-2\}, \\
B^-(V_{K-1},F_{\lambda},r_{K-1}) +c\epsilon &\mbox{ if } s=K-1.
\end{array} \right.
\lb{eq:enm}
\end{align}
As the gradients of $\tilde{u}_{m}^s$ only deviate from the deformations
 $V_s,V_{s+1}$ or the boundary data $F_{\lambda}$ on horizontal transition layers of the size $O(n^{\alpha})$, we may further translate and patch together
the functions $\tilde{u}_{m}^s$ such that the resulting function $u_{m}$
\begin{itemize}
\item is continuous and attains the desired boundary data,
\item still contains translated versions of the boundary and internal layers in the stripes $[-i_n^s,i_n^s]\times  [-m,m]$,
\item the individual jumps in its gradient converge to the respective jumps of the gradient of $u$,
\item due to \rf{eq:enn} satisfies the overall energy bound 
\bes
H^1_{m}(u_{m})\le H_{surf}(u)+c\eps
\ees
for all sufficiently large $m$.
\end{itemize}
More precisely, we define
\begin{align*}
u_{m}^{i-jj}:= \left\{ \begin{array}{ll} 
F_{\lambda}\begin{pmatrix} (i-j)\lambda_m \\ j\lambda_m \end{pmatrix} & \mbox{ for } i\leq -m,\\
\lambda_m\tilde{u}^0_m\left(\begin{pmatrix} i-j\\ j\end{pmatrix}+\begin{pmatrix}m\\ 0\end{pmatrix}\right)
+ D^{0}_m & \mbox{ for } -m\leq i \leq -m+h^{0}_m,\\
\lambda_m\tilde{u}^s_m\left(\begin{pmatrix} i-j\\ j \end{pmatrix}-\begin{pmatrix}k_m^s\\ 0\end{pmatrix}\right)
+D^{s}_m& \mbox{ for } k^{s}_m -h^{s-1}_{m}\leq i \leq k^{s}_m + h^{s}_m\\
&\mbox{ and } s\in \{1,...,K-2\},\\
\lambda_m\tilde{u}^{K-1}_m\left(\begin{pmatrix} i-j \\ j  \end{pmatrix}-\begin{pmatrix}m\\ 0\end{pmatrix}\right)
+ D^{K-1}_m& \mbox{ for } m-h^{K-1}_m\leq i \leq m,\\
F_{\lambda}\begin{pmatrix} (i-j)\lambda_m \\ j\lambda_m \end{pmatrix} +D^{K}_m& \mbox{ for } i\geq m.
\end{array} \right.
\end{align*}
Here, the off-sets $D^{s}_m$ are defined by  
\begin{align*}
D^{0}_m &:= F_{\lambda}\begin{pmatrix} -1 \\ 0 \end{pmatrix} - \lambda_m \tilde{u}^{0} \begin{pmatrix} 0 \\ 0 \end{pmatrix} ,\\
D^{s}_m &:= 
\lambda_m \tilde{u}^{s-1}_m \begin{pmatrix} h^{s-1}_m \\ 0 \end{pmatrix}-\lambda_m\tilde{u}^{s}_m\begin{pmatrix} -h^{s-1}_m \\ 0 \end{pmatrix}+D^{s-1}_m
\quad\text{for}\quad s\in \{1,...,K-1\},\\
D^K_m &:= \lambda_m\tilde{u}^{K-1}_m\begin{pmatrix} 0 \\ 0 \end{pmatrix}-F_{\lambda}\begin{pmatrix} 1 \\ 0 \end{pmatrix}+ D^{K-1}_m.
\end{align*}
Due to the fixed boundary data on the left lateral side of the parallelogram $\Omega$, i.e. $u_m(x)=F_{\lambda}x$ for $x\leq -1$, the $L^4$ gradient convergence and the fundamental theorem of calculus, i.e. for almost every $y\in [-1,1]$
\begin{align*}
u_m(1-y,y) - u_m(-1-y,y) = \int\limits_{0}^{1} \nabla u_m(t(1-y)-(1-t)(1+y),y)\cdot \begin{pmatrix} 2 \\ 0 \end{pmatrix} dt, \\
\int\limits_{0}^{1} \nabla u(t(1-y)-(1-t)y,y)\cdot \begin{pmatrix} 2 \\ 0 \end{pmatrix} dt=F_{\lambda}\begin{pmatrix} 2 \\ 0 \end{pmatrix} \mbox{ and } \nabla u_m \rightarrow \nabla u \mbox{ in } L^{4}(\Omega),
\end{align*}
we infer that $u_m \rightarrow u$ in
$L^{\infty}$. In particular, the value of $D_m^K$ can be estimated by $o(1)$ as $m\go\infty$. Thus, changing the volume fractions of $V_1, V_2$
slightly (i.e. on a set of measure $o(1)$), it is possible to arrange
$D^K_m=0$. Since this can be obtained via a slight extension/ shortening of
one of the domains on which $\nabla u_m = V_1$ or $\nabla u_m = V_2$, it
does not change the energy. As this preserves the convergence $u_m \rightarrow u$ in $L^{\infty}$, the above modification of the domains $\Omega_s$ allows us to chose the associated, slightly modified function $u_m\in W^{1,\infty}_0(\Omega)+F_{\lambda}x$ as the desired recovery sequence.
Since this construction can be achieved for any $\epsilon>0$ and as any $m\in \N$
can be obtained via the procedure introduced in Lemma 4.4, the claimed $\Gamma$-$\limsup$ inequality follows from a diagonal argument.
\end{proof} 

\begin{figure}[t]
\includegraphics[width=0.9 \linewidth]{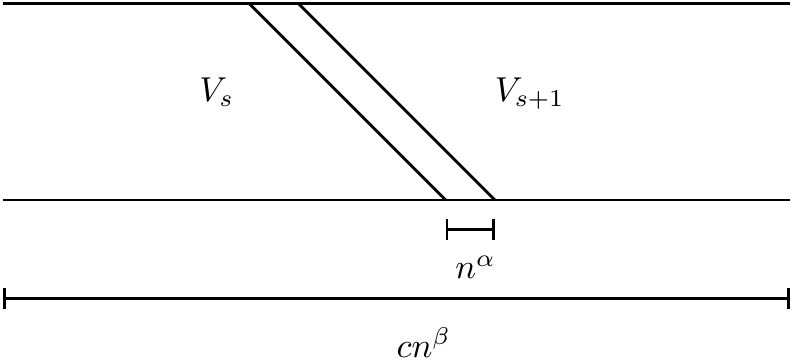}
\caption{During the construction of the recovery sequence in Lemma 4.4, we only keep an
inner core of the scale $O(n^{\alpha})$ of the original function $u_{\bar{n}}$ and paste the deformations $V_s$ and $V_{s+1}$ to the left and
right of it on the length-scale $m(n)=O(n^{\beta})$.}
\end{figure}

As a corollary we obtain
\begin{corollary}
Let $\{u_{n}\}_{n\in \N}$ be a minimizing sequence of $H_{n}(\cdot)$ corresponding to admissible boundary data (\ref{eq:newBC}). Then the total number of boundary and internal layers is equal to three if $\lambda\notin\{0,1\}$. The internal layer is either positioned at the horizontal coordinate $x=1-2\lambda$ or at $x= -1+2\lambda$.
\end{corollary}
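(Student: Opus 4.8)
\emph{Proof proposal.} The plan is to deduce the statement from the $\Gamma$-convergence of Theorem~\ref{thm:2} — equivalently, from the asymptotic identity \rf{mas} — by combining two elementary ingredients: a volume-fraction identity forced on the limiting deformation by the affine boundary data \rf{eq:newBC}, and a subadditivity (triangle) inequality for the boundary- and internal-layer energies. By Theorem~\ref{T1}, a (subsequence of a) minimizing sequence $\{u_n\}$ converges in $W^{1,4}(\Omega)$ to a piecewise affine $u$ with $\nabla u\equiv V_s\in\{U_0,QU_1\}$ on $\Omega(x_s,x_{s+1})$, $s\in\{1,\dots,K-1\}$, the $V_s$ being pairwise distinct for consecutive $s$; by Theorem~\ref{thm:2} this $u$ minimizes $E_{surf}$, so that $(K,V_1,\dots,V_{K-1})$ realizes the minimum in \rf{mas}. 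It therefore remains to show that this minimum is attained precisely for $K=3$ (i.e. two boundary layers and exactly one internal layer) and to locate the single interface.

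\emph{Step 1 ($K\ge 3$).} The condition $u=F_\lambda z$ on $\partial_x\Omega$ together with $u\in W^{1,\infty}_0(\Omega)+F_\lambda x$ implies, exactly as in the proof of the $\Gamma$-$\limsup$ inequality, that for a.e. horizontal segment $\sigma$ joining the two lateral sides one has $\int_\sigma\nabla u\cdot\begin{pmatrix}1\\0\end{pmatrix}=F_\lambda\begin{pmatrix}2\\0\end{pmatrix}=2(1-\lambda)U_0\begin{pmatrix}1\\0\end{pmatrix}+2\lambda\,QU_1\begin{pmatrix}1\\0\end{pmatrix}$. Since $U_0\begin{pmatrix}1\\0\end{pmatrix}=\begin{pmatrix}a\\0\end{pmatrix}$ and $QU_1\begin{pmatrix}1\\0\end{pmatrix}=b\begin{pmatrix}\cos\gamma\\\sin\gamma\end{pmatrix}$ with $\sin\gamma=\tfrac{a^2-b^2}{a^2+b^2}\neq 0$, these two vectors are linearly independent; hence if $\nabla u$ were constant (a single well, which is the case $K\le 2$) the left-hand side could never equal the right-hand side for $\lambda\in(0,1)$. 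Thus $\nabla u$ must take both values $U_0$ and $QU_1$, so there is at least one interior interface and $K\ge 3$.

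\emph{Step 2 ($K\le 3$).} Here I would invoke subadditivity of the layer energies. The internal version $\inf_r C(V,W,r)+\inf_{r'}C(W,V',r')\ge\inf_r C(V,V',r)$ is already established in the proof of the $\Gamma$-$\liminf$ inequality (degenerate-interval case), and the boundary versions $\inf_{r,r'}\bigl(B^+(F_\lambda,V,r)+C(V,W,r')\bigr)\ge\inf_r B^+(F_\lambda,W,r)$, together with the mirror statement for $B^-$, follow from the same ``move the boundary datum to infinity and concatenate (near-)optimal test configurations'' argument, the horizontal cutting procedure used in Lemma~\ref{lem:cut} ensuring that the glued configuration stays admissible in $\mathcal{A}^r_{n,\tau}$. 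Since only two wells occur and the chain is alternating, an alternating chain with $K\ge 4$ can have its first interior well absorbed into the left boundary layer, giving $E^K(F_\lambda,V_1,\dots,V_{K-1},F_\lambda)\ge E^{K-1}(F_\lambda,V_2,\dots,V_{K-1},F_\lambda)$; iterating yields $E^K\ge E^3(F_\lambda,V_{K-2},V_{K-1},F_\lambda)\ge\min\{E^3(F_\lambda,U_0,QU_1,F_\lambda),\,E^3(F_\lambda,QU_1,U_0,F_\lambda)\}$. With Step~1 this forces $K=3$ in \rf{mas}, and $V_1\neq V_2$ (again by the argument of Step~1, since $V_1=V_2$ would make $\nabla u$ constant).

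\emph{Step 3 (position of the interface, and the main obstacle).} With $\nabla u=V_1$ on $\Omega(-1,x^*)$ and $\nabla u=V_2$ on $\Omega(x^*,1)$, $\{V_1,V_2\}=\{U_0,QU_1\}$, the identity from Step~1 becomes $(x^*+1)V_1\begin{pmatrix}1\\0\end{pmatrix}+(1-x^*)V_2\begin{pmatrix}1\\0\end{pmatrix}=2(1-\lambda)U_0\begin{pmatrix}1\\0\end{pmatrix}+2\lambda\,QU_1\begin{pmatrix}1\\0\end{pmatrix}$; comparing the (linearly independent) $U_0\begin{pmatrix}1\\0\end{pmatrix}$- and $QU_1\begin{pmatrix}1\\0\end{pmatrix}$-components gives $x^*=1-2\lambda$ if $(V_1,V_2)=(U_0,QU_1)$ and $x^*=-1+2\lambda$ if $(V_1,V_2)=(QU_1,U_0)$, which is the assertion. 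I expect the only genuinely delicate point to be the rigorous justification of the boundary-layer triangle inequalities of Step~2 at the level of the $n$-dependent discrete minimum problems defining $B^\pm$ and $C$ (which are $\liminf$s, not limits): concatenating near-optimal test configurations for consecutive layers while preserving the non-interpenetration constraint \rf{eq:rad} is precisely the technical content of the cutting and averaging constructions, so it reduces to bookkeeping analogous to Lemmas~\ref{lem:cut} and~\ref{lem:av}, but that is where the actual work lies.
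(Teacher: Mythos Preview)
Your proposal is correct and follows the same route the paper gestures at: the paper's own proof is a two-sentence remark that the corollary ``follows from the previous $\Gamma$-convergence result'' and is ``implicitly present'' in the recovery-sequence construction, whereas you actually carry out the argument (volume-fraction constraint for $K\ge 3$, layer-energy subadditivity for $K\le 3$, and the linear-independence computation locating $x^*$). One small point you leave implicit and that the paper also glosses over: to conclude that minimizers have \emph{exactly} $K=3$ (rather than merely $E^K\ge E^3$), you need strict positivity of each genuine transition cost $C(V,W,r)$ for $V\neq W$; this follows from (H3)--(H4), since any admissible configuration joining distinct wells must leave both wells on a set of positive horizontal measure, and is what the paper means in Lemma~\ref{lem:enI} by ``any further transition \ldots costs an additional finite amount of energy.''
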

\begin{proof}
This follows from the previous $\Gamma$-convergence result, in particular it is implicitly present in Lemma 4.3. For a fixed $\lambda$ the position of the
internal layer is uniquely prescribed by the points $x=1-2\lambda$ or $x= -1+2\lambda$.
\end{proof}

\section{Comparison with Numerics}

In this section we present the results of our numerical simulations finding
local minimizers to (\ref{DC}) considered with fixed $\tau_i\equiv\tau$ 
(i.e. the minimization was done only among chains that are uniform in the vertical direction) and boundary conditions
\be
u^i=U_0\left(\begin{array}{c}i\\0\end{array}\right)\ \text{if}\ i\le -n\ \text{and}\ \ \text{and}\ 
u^i=QU_1\left(\begin{array}{c}i\\0\end{array}\right)\ \text{if}\ i\ge n.
\lb{eq:newBCc}
\ee
The numerics are based on a local optimization Newton type algorithm. Starting with a deformation $u_n\in \mA_{n,\tau}$
corresponding to a martensitic twin, i.e. a configuration such that 
\begin{equation}
u_n^i=U_0\left(\begin{array}{c}i  \\0\end{array}\right)\ \text{if}\ i\le 0\ \ \text{and}\ \
u_n^i=QU_1\left(\begin{array}{c} i \\0\end{array}\right)\ \text{if}\ i\ge 0,
\label{MT}
\end{equation}
we initially preoptimize the position of the middle atom $i=0$ in (\ref{MT}). Namely,
we optimize its position w.r.t. (\ref{DC}) considered with $\tau_i\equiv\tau$ without changing the configuration of the other atoms. 
\begin{figure}
\includegraphics[width=0.6\textwidth]{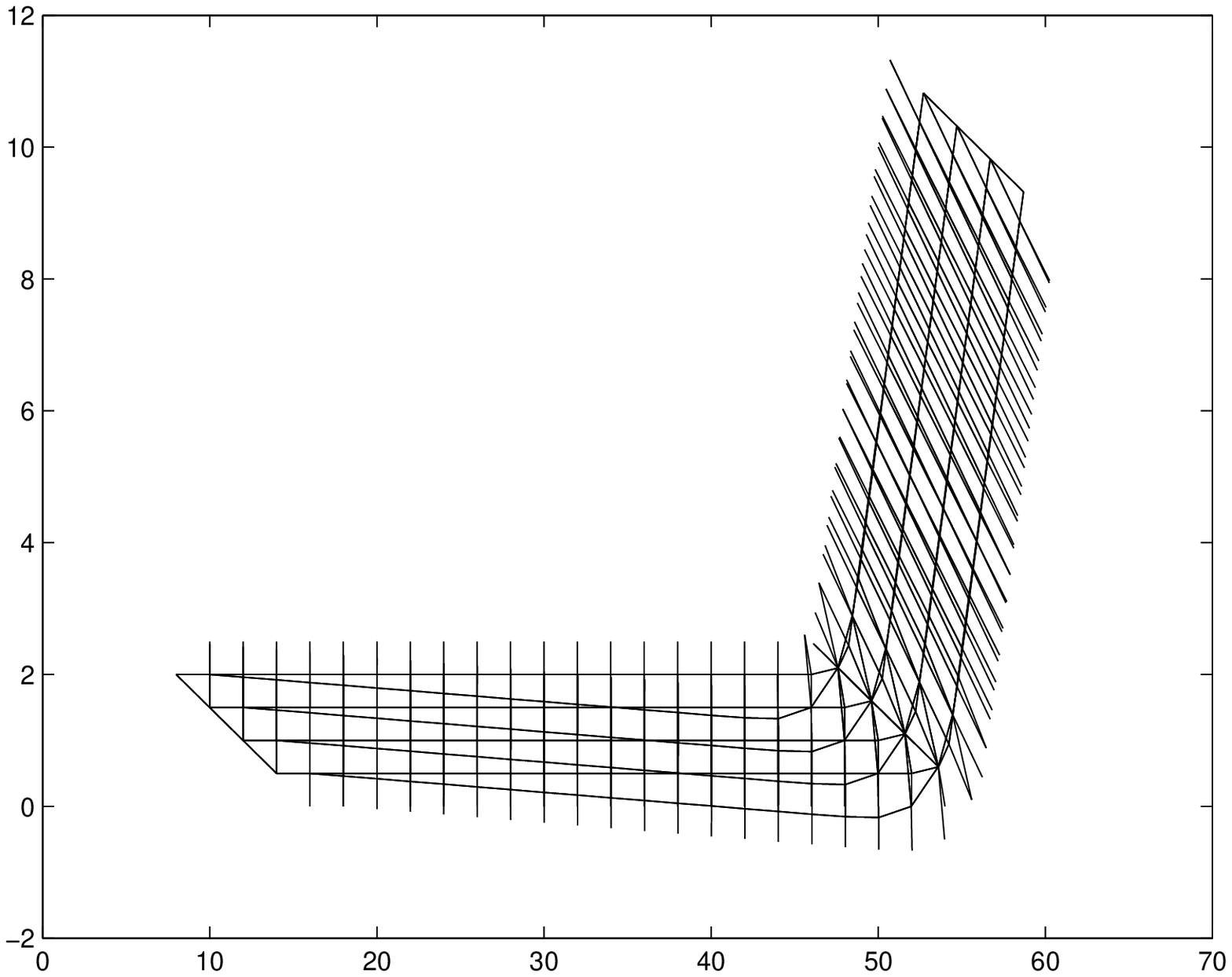}
\includegraphics[width=0.6\textwidth]{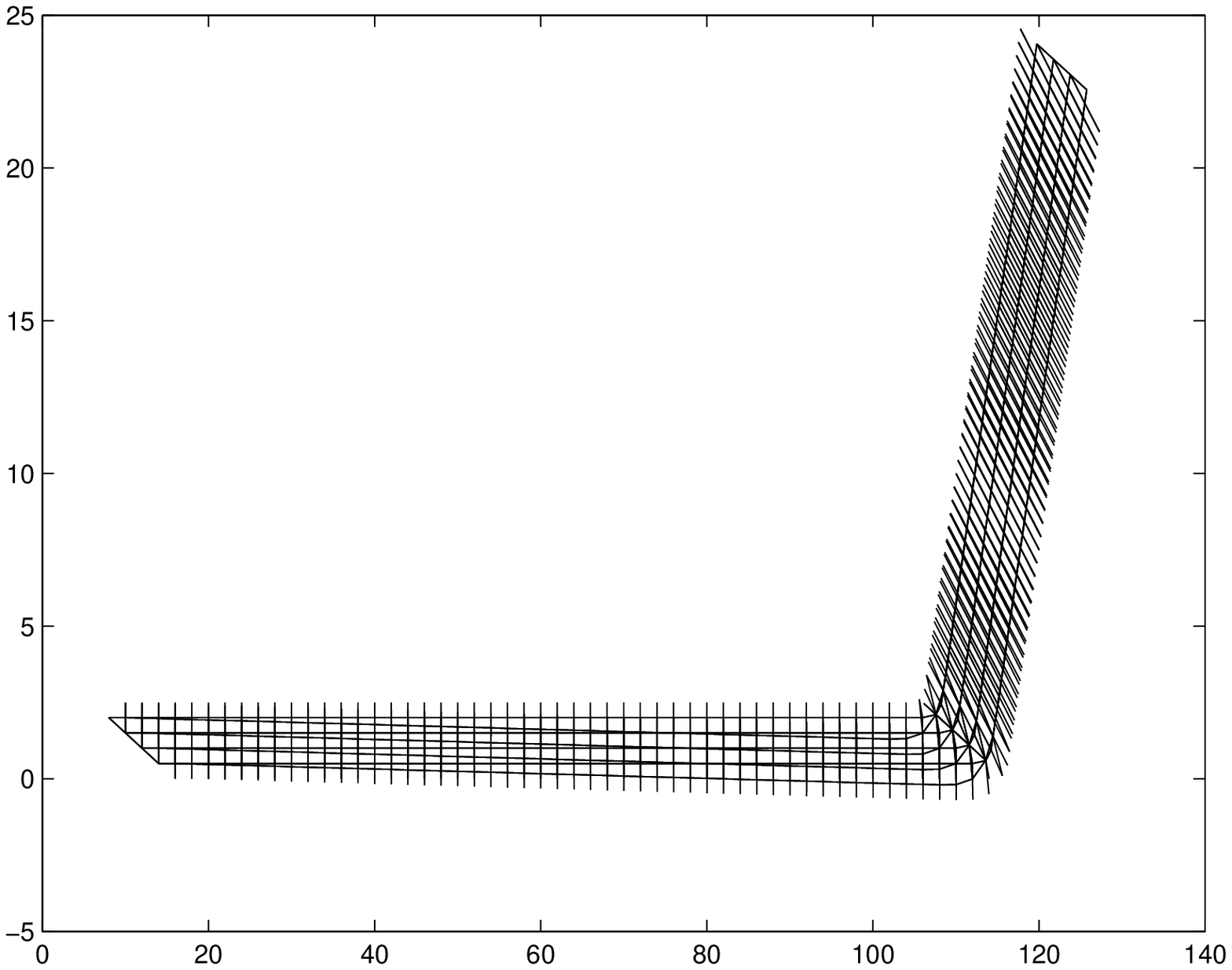}
\caption{Twins appearing as local minimizers ($n=40$ and $n=100$).}
\end{figure}

Next, the resulting preoptimized configuration is used as an initial guess for the Newton
algorithm and thus, a nearby lying local minimizer of (\ref{DC}) considered with (\ref{eq:newBCc}) is found.\\

The results in Fig. 4 show  such a minimizer possessing a straight twinning interface
coinciding with that of (\ref{MT}) and prescribed by the vector $\tau$. 
The figure shows that the deviation from the the twinned configuration quickly decreases as the number of atoms, $n$, tends to infinity. 
\begin{figure}
\includegraphics[width=0.6\textwidth]{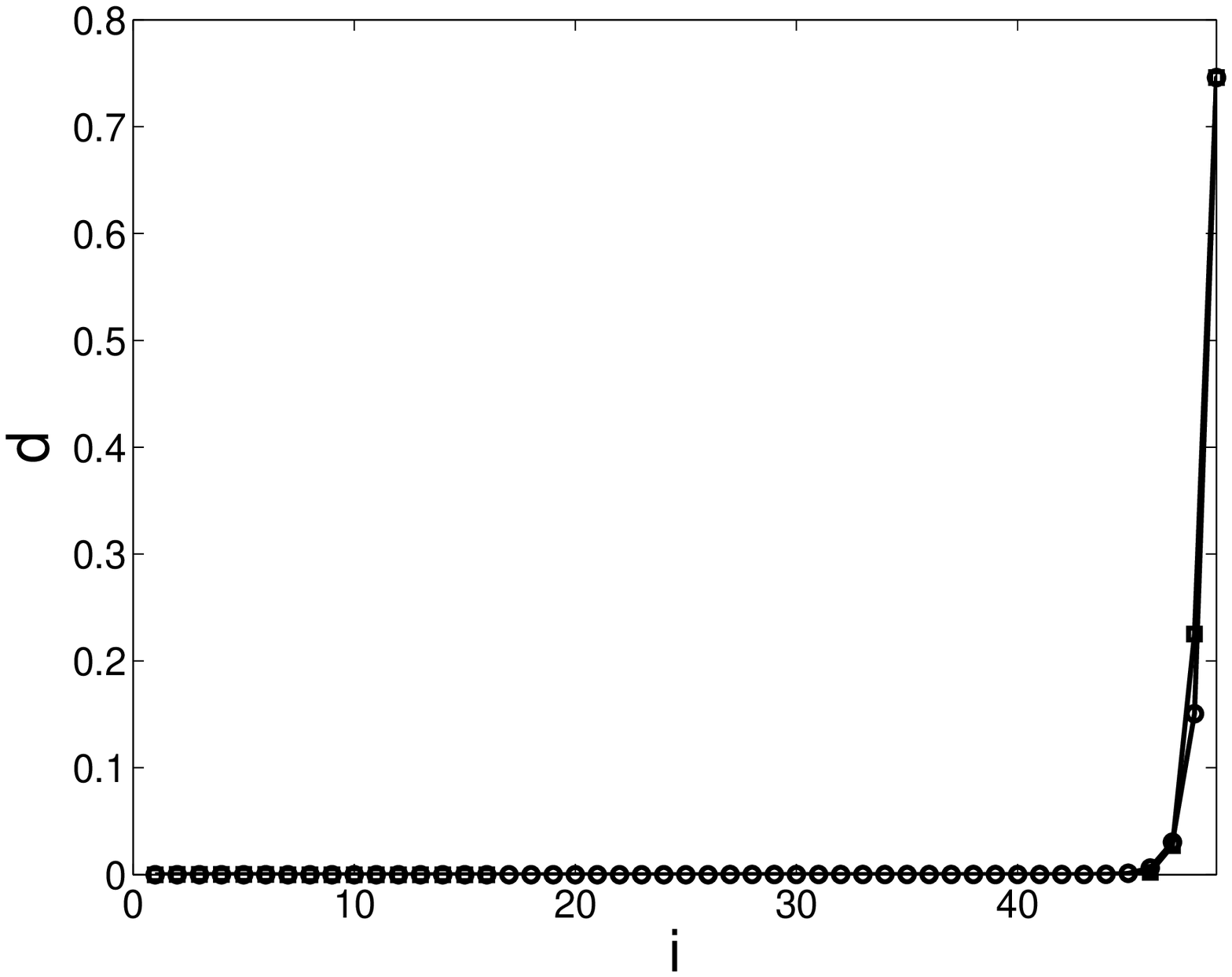}
\includegraphics[width=0.6\textwidth]{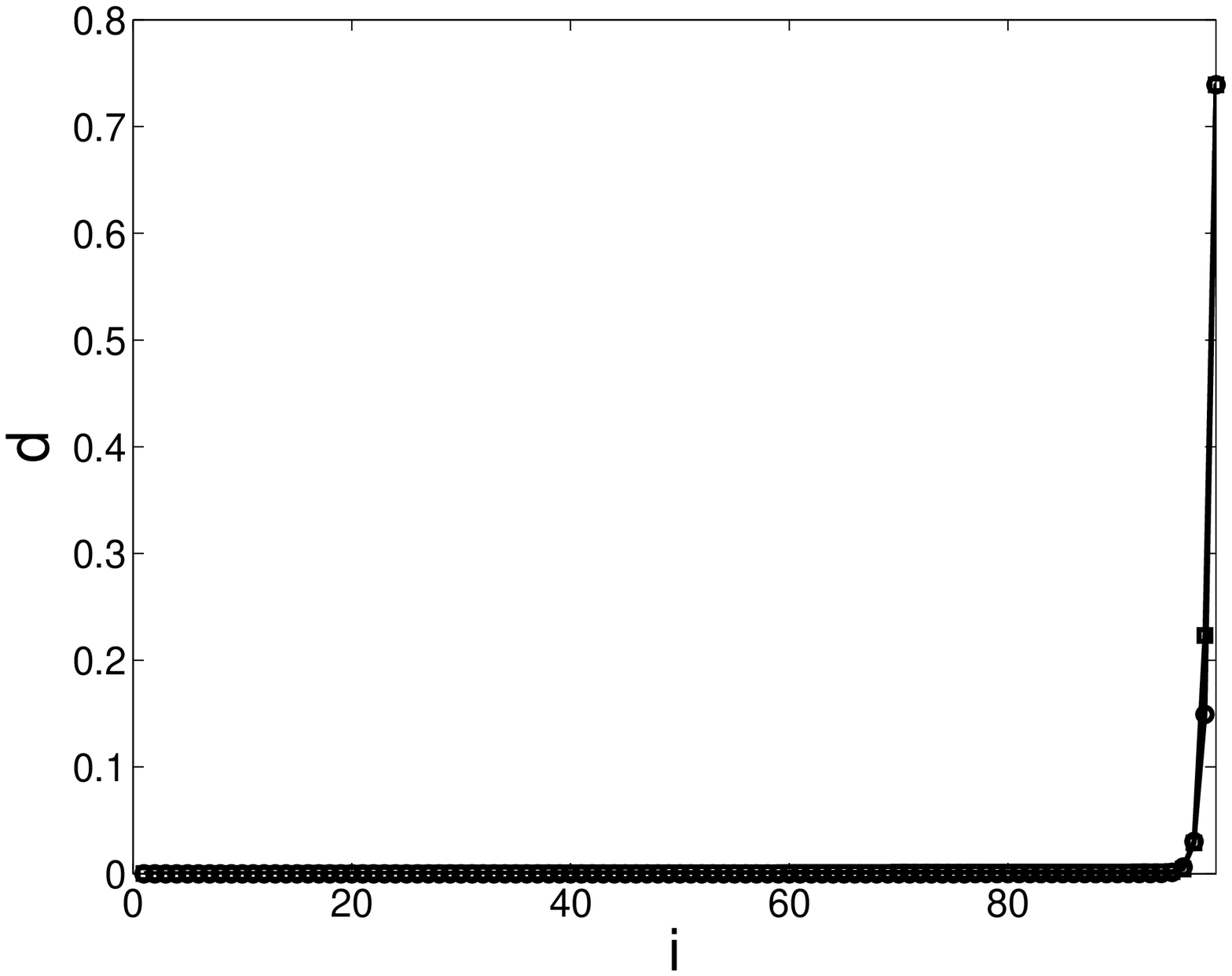}
\caption{Absolute deviation from the rank-one connected twin configuration (\ref{MT}) ($n=100$ and $n=200$).
Square points-- the computed numerical deviation, round points-- its approximation by an exponential profile.}
\end{figure}
In Fig. 5 we plot the absolute deviation in the atom positions of our local minimizer from the rank-one twin configuration (\ref{MT}).
One finds that the deviation decreases exponentially
starting from the middle atom lying on the twinning interface. Note, that the decay is not given exactly by a simple exponent but nevertheless is nicely approximated by it. Surprisingly, the middle atoms of our local minimizer and the initial guess for the Newton algorithm 
lying on the twinning interface coincide (deviation between them is zero). 
This might appear due to the initial preoptimization of the position of the middle atom
in (\ref{MT}) which was described above. This simple preoptimization step seems to find the right position of the middle atom on the twinning interface.

\section{Summary and Discussion}
In this article we introduced a general type of two-well Hamiltonian defined on a two-dimensional sublattice of $\Z^2$ by imposing the assumptions (H1)-(H4).
After restricting the set of possible deformations to the special case of 1D chains, non-uniformly 
extended in the vertical direction and considered with the boundary
conditions \rf{eq:newBC}, we were able to show piecewise asymptotic rigidity of sequences whose energy scales as surface energy. The corresponding compactness and $\Gamma$-convergence arguments allowed us to rigorously derive the continuum limit of the surface energy concentrating on the line interfaces between twin configurations. Finally, a numerical minimization of the discrete problem reflected our analytical results and showed
an interesting exponential decay in the boundary layer profiles between arising twins.\\

Keeping these results in mind, we conclude by briefly commenting on the underlying physical assumptions, possible generalizations and some interesting related questions:\\

Since low energy states are expected to remain close to laminar
configurations, our class of constrained configurations, i.e. atomic chains, seem to be natural objects -- even though they impose restrictions on the model. An immediate -- though less natural -- generalization to the three-dimensional two-well problem is possible: Considering configurations in which a chain of atoms (i.e. the atoms on the (i,0,0)-line with $i\in [-n,n]$) is freely deformed while the atoms on the corresponding orthogonal two-dimensional planes are deformed with a variable elongation, $\tau^{i}_{n}$, in one direction and a fixed extension, $\tau$, in the other planar direction, (basically) reduces this 3D situation to our 2D situation. Indeed, under these assumptions (and appropriate Dirichlet boundary conditions) the 3D setting corresponds to rank-one perturbations of a ``one-dimensional'' configuration. As in our two-dimensional framework this then allows to conclude that in the in-plane directions all deformations have to be close to a single well -- jump!
 s between the wells are impossible in this direction. This again relies on the fact that there are at most two intersections with the wells along any arbitrary rank-one direction in the matrix space.\\
As our arguments rely on the one-dimensionality in the direction vertical to
the generating chain, it is at the moment neither clear how to extend our
results to the full two-dimensional setting nor to the three-dimensional case
with variable elongations in both of the planar directions (i.e. which in a
sense would correspond to a ``$(1+2\epsilon)$-dimensional'' argument). \\

In the case of general boundary conditions one expects that minimizers reflect the microstructure predicted in continuum theories and determine a length scale for the microstructures. For an investigation of this question one would need to proceed to the full two-dimensional setting which seems to be a very difficult open problem which is not even fully understood in the continuous framework. \\

Finally, an analytical identification of the minimizing sequences in the definitions of the boundary and internal layers
\rf{BL}--\rf{IL} poses a further interesting problem. It seems impossible to find explicit solutions
of the underlying Euler-Lagrange systems -- even for our model Hamiltonian \rf{HD}.
Nevertheless, it could be possible to justify the exponential decay of the boundary and internal layers found numerically in Fig. 4--5 following e.g.
approaches outlined in~\cite{CT02,Hu13}. From an analytical side already the ``cutting procedure'' introduced in Lemma 4.2 and Remark 4.1
shows that the width of the internal and boundary layers in the corresponding infimizing sequences 
can be made arbitrarily algebraically small, i.e. of the size $O(n^\alpha)$ for any $0<\alpha<1/2$. This again suggests that
the width of the layers should decay exponentially with $n$.

\section*{Acknowledgments} 
G.K. acknowledges the postdoctoral scholarship at the Max-Planck-Institute for Mathematics in the Natural Sciences, Leipzig. A.R. thanks the Deutsche Telekom Stiftung and the Hausdorff Center of Mathematics for financial support. Furthermore she would like to thank the MPI for its kind hospitality.
The authors thank Jens Wohlgemuth for valuable comments.

\bibliography{bibliography}
\bibliographystyle{unsrt} 
\clearpage
\addtocounter{tocdepth}{2}

\end{document}